\def\?[#1]{\textbf{[#1]}\marginpar{\Large{\textbf{??}}}}
\let\epsilon=\varepsilon 
\newcommand{\C}{{\mathcal C}}
\newcommand{\RR}{{\mathbb R}}
\newcommand{\NN}{{\mathbb N}}
\newcommand{\CC}{{\mathbb C}}
\newcommand{\CI}{{{\mathcal C}^\infty}}
\newcommand{\CIc}{{{\mathcal C}^\infty_{\rm{c}}}}
\newtheorem{thm}{Theorem}
\newtheorem{prop}{Proposition}
\newtheorem{defi}[prop]{Definition}
\newtheorem{lem}{Lemma}
\numberwithin{equation}{section}
\DeclareMathOperator{\Det}{det}
\let\Im=\Imag
\let\Re=\Real
\DeclareMathOperator{\tr}{tr}
\title{Resonances as viscosity limits for exterior dilation analytic potentials}
\author{Haoren Xiong}
\email{xiong@math.berkeley.edu}
\address{Department of Mathematics, University of California,
Berkeley, CA 94720, USA}
\begin{document}

\maketitle

\begin{abstract}
For exterior dilation analytic potential, $V$, we use the method of complex scaling to show that the resonances of $ - \Delta + V $, in a conic neighbourhood of the real axis, are limits of eigenvalues of $ - \Delta + V - i \epsilon x^2 $ as
$ \epsilon \to 0+ $, if $V$ can be analytically extended from $\RR^n$ to a truncated cone in $\CC^n$.
\end{abstract}

\section{Introduction and statement of results}
\label{introduction}

We extend the results of \cite{Zw-vis}, when $V\in L_{\textrm{comp}}^\infty$, to the case of exterior dilation analytic potentials. For motivation and pointers to the literature we refer to \cite{Zw-vis}.

Thus, we consider
\[  H := - \Delta + V , \]
where $V$ is a real-valued potential which can be analytically extended from $\{x\in\RR^n\,:\,|x|> R\}$, for some $R>0$, to a truncated cone
\[\C_{\beta_0}^R := \{ z \in \CC^n\,:\,|\Im z| < \tan\beta_0 |\Re z|\ \textrm{and}\ |\Re z| > R \},\quad \beta_0 \leq \pi/8.\]
We still denote the analytic extension by $V$ and assume that
\begin{equation}
\label{eqn:decayV}
    \lim_{\C_{\beta_0}^R\owns |z|\to\infty}V(z) = 0.
\end{equation}
The resonances of $H$ are defined by the Aguiliar-Balslev-Combes-Simon theory, see \cite[\S 16, \S 18]{hislop2012}, \cite[\S 4.5]{res} and a review in \S\ref{Resonance}.

We now introduce a {\em regularized} operator,
\begin{equation}
\label{eqn:Heps}
 H_\epsilon := - \Delta - i \epsilon x^2 + V, \ \ \epsilon > 0 .
\end{equation}
(We write $ x^2 := x_1^2 + \cdots + x_n^2 $.) It is easy to see, with details reviewed in \S \ref{Eigenvalues}, that $ H_\epsilon $
is a non-normal unbounded operator on $ L^2 ( \RR^n ) $ with a discrete spectrum.
We have

\begin{thm}
\label{t:1}
Suppose that $ \{ z_j ( \epsilon) \}_{j=1}^\infty $ are
the eigenvalues of $ H_\epsilon $. Then, uniformly on any compact subsets of $ \{ z\,:\,-2\beta_0 < \arg z < 3\pi/2 + 2\beta_0 \} $,
\[   z_j ( \epsilon ) \to z_j ,  \ \ \epsilon \to 0 + , \]
where $ z_j $ are the resonances of $ H $.
\end{thm}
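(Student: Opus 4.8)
\smallsection{Proof strategy} The plan is to prove Theorem~\ref{t:1} by complex scaling: I would compare the scaled operators $H_\theta$ and $H_{\epsilon,\theta}$, where the scaling angle $\theta\in(0,\beta_0)$ is taken close to $\beta_0$, depending on the compact set in which convergence is sought. Fix a compact $K\subset\{-2\beta_0<\arg z<3\pi/2+2\beta_0\}$ and pick $\theta=\theta(K)\in(0,\beta_0)$, close enough to $\beta_0$, so that the ray $e^{-2i\theta}[0,\infty)$ misses $K$ and (as the symbol computation in the last paragraph shows) so that the argument of every point of $K$ avoids the cone $[2\theta-\tfrac\pi2,-2\theta]$ modulo $2\pi$. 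Using the exterior dilation analyticity of $V$, introduce the deformation $\Gamma_\theta\subset\CC^n$ that equals $\RR^n$ on $\{|x|\le R_0\}$ (fixed $R_0>R$) and $e^{i\theta}\RR^n$ on $\{|x|\ge 2R_0\}$, and let $H_\theta$ and $H_{\epsilon,\theta}$ be $-\Delta+V$ and $-\Delta+V-i\epsilon x^2$ realized on $\Gamma_\theta$; near infinity these read $e^{-2i\theta}(-\Delta)+V(e^{i\theta}x)$ and $e^{-2i\theta}(-\Delta)-i\epsilon e^{2i\theta}x^2+V(e^{i\theta}x)$. From \S\ref{Resonance}, $\theta\mapsto H_\theta$ is a holomorphic family of type (A), $\Spec_{\mathrm{ess}}(H_\theta)=e^{-2i\theta}[0,\infty)$, and the discrete eigenvalues of $H_\theta$ in $\CC\setminus e^{-2i\theta}[0,\infty)$ — in particular all of $K$ — are exactly the resonances $z_j$ of $H$.

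Next I would record that the eigenvalues of $H_{\epsilon,\theta}$ are the numbers $z_j(\epsilon)$. Since $\Re(-ie^{2i\theta})=\sin2\theta>0$, the term $-i\epsilon e^{2i\theta}x^2$ is confining, so $H_{\epsilon,\theta}$ has compact resolvent; one checks that $\theta\mapsto H_{\epsilon,\theta}$ is a holomorphic family of type (A) on the $\theta$‑independent harmonic‑oscillator domain (cf.\ \S\ref{Eigenvalues}), and that for real $\theta$ the contour deformation is implemented by a unitary, so $H_{\epsilon,\theta}$ is then unitarily equivalent to $H_\epsilon$. Holomorphy of the discrete spectrum therefore forces the eigenvalues of $H_{\epsilon,\theta}$ to be $\theta$‑independent and equal to $\{z_j(\epsilon)\}$. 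It thus suffices to show that, uniformly on $K$, the eigenvalues of $H_{\epsilon,\theta}$ converge with multiplicities to those of $H_\theta$ as $\epsilon\to0+$; letting $K$ exhaust the sector will then finish the proof.

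Since $\mathrm{dom}(H_{\epsilon,\theta})$ is a proper subspace of $\mathrm{dom}(H_\theta)$ that does not expand as $\epsilon\to0+$, this is a singular perturbation, and I would obtain it through a glued parametrix. With $R_1>2R_0$ large, cutoffs $\chi_0\equiv1$ on $\{|x|\le R_1\}$ and supported in $\{|x|\le2R_1\}$, $\chi_\infty=1-\chi_0$, functions $\widetilde\chi_0,\widetilde\chi_\infty$ equal to $1$ near $\supp\chi_0,\supp\chi_\infty$, and the model operator $P_{\epsilon,\theta}:=e^{-2i\theta}(-\Delta)-i\epsilon e^{2i\theta}x^2$ with resolvent $R_\infty^\epsilon(z)$, set
\[
 E_\epsilon(z):=\widetilde\chi_0\,(H_\theta-z)^{-1}\,\chi_0+\widetilde\chi_\infty\,R_\infty^\epsilon(z)\,\chi_\infty .
\]
Then $(H_{\epsilon,\theta}-z)E_\epsilon(z)=I+A_\epsilon(z)$, where $A_\epsilon(z)$ gathers: the commutators $[\Delta,\widetilde\chi_j]$, compactly supported in the transition annulus and composed with genuine resolvents, hence compact; the discrepancy $H_{\epsilon,\theta}-H_\theta=-i\epsilon e^{2i\theta}x^2$ cut down to $\{|x|\le2R_1\}$, of norm $O(\epsilon R_1^2)$; and the discrepancy $H_{\epsilon,\theta}-P_{\epsilon,\theta}=V(e^{i\theta}x)$ on $\supp\chi_\infty$, of norm $o_{R_1\to\infty}(1)$ by \eqref{eqn:decayV}. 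The same construction with $P_{0,\theta}:=e^{-2i\theta}(-\Delta)$ gives $(H_\theta-z)E_0(z)=I+A_0(z)$, with $I+A_0(z)$ non‑invertible exactly at the resonances in $K$, counted with multiplicity (meromorphic Fredholm theory, \S\ref{Resonance}). I would then verify $A_\epsilon(z)\to A_0(z)$ in operator norm, locally uniformly on $K$: the $O(\epsilon R_1^2)$ term vanishes, and for the pieces $\phi\,R_\infty^\epsilon(z)\,\psi$ with $\phi,\psi$ fixed smooth compactly supported functions, the resolvent identity gives $\phi\bigl(R_\infty^\epsilon-R_\infty^0\bigr)\psi=\phi\,R_\infty^\epsilon\,(i\epsilon e^{2i\theta}x^2)\,R_\infty^0\,\psi$, of norm $\le C\epsilon$ because $x^2R_\infty^0$ restricted to functions supported in a fixed ball is bounded (the quadratic weight is harmless where the kernel of $(e^{-2i\theta}(-\Delta)-z)^{-1}$ decays exponentially off $e^{-2i\theta}[0,\infty)$), provided one already has $\|R_\infty^\epsilon(z)\|\le C$ uniformly in $\epsilon$. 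A Rouché argument for the (regularized) determinant of $I+A_\epsilon(\cdot)$ then yields the convergence, with multiplicity and uniform on $K$, of its zeros — i.e.\ of the eigenvalues $z_j(\epsilon)$ — to the resonances $z_j$.

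The step I expect to be the main obstacle is exactly that uniform‑in‑$\epsilon$ bound $\sup_{z\in K}\|R_\infty^\epsilon(z)\|\le C$, needed both for the parametrix itself and for the norm convergence above. Because $P_{\epsilon,\theta}$ is non‑normal — its spectrum lies on $e^{-i\pi/4}[0,\infty)$ and collapses to $0$ as $\epsilon\to0$ — proximity of $z$ to $\Spec(P_{\epsilon,\theta})$ does not control the resolvent. I would reduce to a fixed operator by the dilation $x\mapsto\epsilon^{-1/4}x$, which conjugates $P_{\epsilon,\theta}$ to $\epsilon^{1/2}\widetilde P_\theta$ with $\widetilde P_\theta:=e^{-2i\theta}(-\Delta-ie^{4i\theta}x^2)$, so that the claim becomes $\|(\widetilde P_\theta-w)^{-1}\|\le C/|w|$ for $|w|$ large with $\arg w\in\arg(K)$. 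The Weyl symbol $p(x,\xi)=e^{-2i\theta}|\xi|^2+e^{i(2\theta-\pi/2)}|x|^2$ of $\widetilde P_\theta$ takes, for $(x,\xi)\neq0$, values with argument in the cone $[2\theta-\tfrac\pi2,-2\theta]$, of angular width $\tfrac\pi2-4\theta$; this width is non‑negative precisely because $\beta_0\le\pi/8$, and for $\theta$ close to $\beta_0$ the directions in $K$ lie strictly between this cone and its $2\pi$‑translate, so $p(x,\xi)-w$ is globally elliptic with $|p(x,\xi)-w|\gtrsim|x|^2+|\xi|^2+|w|$ and the Shubin calculus (or the explicit Mehler kernel for the rotated oscillator) yields $\|(\widetilde P_\theta-w)^{-1}\|\le C/|w|$. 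This is exactly where the hypothesis $\beta_0\le\pi/8$ and the precise opening $3\pi/2+4\beta_0$ of the sector in Theorem~\ref{t:1} enter.
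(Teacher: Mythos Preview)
Your strategy matches the paper's: complex-scale, identify the resonances and the $\epsilon$-eigenvalues with the discrete spectra of the scaled operators $H(\theta)$ and $H_\epsilon(\theta)$ (Lemmas~\ref{lem:resonance} and~\ref{lem:eigenvalue}), then compare the two via an analytic-Fredholm/Rouch\'e argument. You have also correctly isolated the two analytic inputs that make this work: the uniform-in-$\epsilon$ bound $\|(P_{\epsilon,\theta}-z)^{-1}\|\le C/|z|$ on the relevant sector (Proposition~\ref{prop:normRepstheta}, quoted from \cite{Zw-vis}; your ellipticity sketch via $x\mapsto\epsilon^{-1/4}x$ and the angular-width computation is the right mechanism, and is where $\beta_0\le\pi/8$ enters), and the weighted free-resolvent bound $\langle x\rangle^{-2}R_\infty^0\langle x\rangle^2=O(1)$ (Lemma~\ref{lem:Rtheta}).

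The difference is in the Fredholm reduction, and your glued parametrix has two soft spots. First, taking $(H_\theta-z)^{-1}$ as the interior model makes $A_\epsilon(z)$ and $A_0(z)$ \emph{meromorphic} with poles exactly at the resonances (through the commutator term $[\Delta,\widetilde\chi_0](H_\theta-z)^{-1}\chi_0$), so ``$I+A_0(z)$ non-invertible exactly at the resonances'' is not well-posed as stated; the Gohberg--Sigal count must then track zeros minus poles on both sides, which you have not set up. Second, your convergence $A_\epsilon\to A_0$ is argued only for sandwiches $\phi R_\infty^\epsilon\psi$ with $\phi,\psi$ compactly supported, but in your parametrix the right cutoff is $\chi_\infty$, equal to $1$ at infinity. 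For the commutator piece this is still fixable via the weighted estimate (write $R_\infty^\epsilon-R_\infty^0=R_\infty^0(i\epsilon e^{2i\theta}x^2)R_\infty^\epsilon$ and use $\phi R_\infty^0\langle x\rangle^2=O(1)$), but for the $V$-discrepancy piece $\widetilde\chi_\infty V(e^{i\theta}x)R_\infty^\epsilon\chi_\infty$ you have only $V\to0$ with no rate, $\|R_\infty^\epsilon-R_\infty^0\|_{L^2\to L^2}\not\to0$, and no compact support on the right to invoke the weighted bound. The paper's reduction avoids both issues: split $V=\chi V+(1-\chi)V$ with $\|(1-\chi)V\|_\infty$ small, absorb the tail into the free part by a Neumann series so that $R_{H_\epsilon(\theta)-\chi V}(z)$ is analytic and uniformly bounded on $\Omega$ for all $0\le\epsilon<\epsilon_0$, and take $I+R_{H_\epsilon(\theta)-\chi V}(z)\,\chi V$ as the Fredholm family (Lemma~\ref{lem:MeroContinuation}). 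Now the family is analytic in $z$, the compactly supported $\chi V$ on the right lets the resolvent identity produce an honest $O(\epsilon)$ bound via Lemma~\ref{lem:Rtheta} (this is Lemma~\ref{lem:poc estimate}), and Gohberg--Sigal--Rouch\'e finishes (Theorem~\ref{t:2}).
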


\medskip
\noindent
{\bf Notation.}
We use the following notation: $ f =  \mathcal O_\ell (
 g   )_H $ means that
$ \|f \|_H  \leq C_\ell  g $ where the norm (or any seminorm) is in the
space $ H$, and the constant $ C_\ell  $ depends on $ \ell $. When either
$ \ell $ or
$ H $ are absent then the constant is universal or the estimate is
scalar, respectively. When $ G = \mathcal O_\ell ( g ) : {H_1\to H_2 } $ then
the operator $ G : H_1  \to H_2 $ has its norm bounded by $ C_\ell g $.
Also when no confusion is likely to result, we denote the operator
$ f \mapsto g f $ where $ g $ is a function by $ g $.

\medskip
\noindent
{\sc Acknowledgments.} The author would like to thank
Maciej Zworski for helpful discussions. I am also grateful to the anonymous referee for the careful reading
of the first version and for the valuable comments.
This project was supported in part by
the National Science Foundation grant 1500852.

\section{spectral deformation and analytic vectors}
\label{SpecDeform}

We will review several basic concepts in the Aguilar-Balslev-Combes-Simon theory, such as spectral deformation and analytic vectors. For a detailed introduction, we refer to \cite[\S 17]{hislop2012} and the references given there.

Let $h \in \CI(\RR)$ be a non-decreasing function which satisfies
\begin{equation}
\label{eqn:h}
\begin{cases}
    h(t) = 0, & t<2R, \\
    h(t) = 1, & t>8R.
\end{cases}
\end{equation}
Moreover, we assume that
\begin{equation}
\label{eqn:chi}
\sup_{t\in\RR} h(t) + t h'(t) \leq 3/2.
\end{equation}
We define $g:\RR^n \to \RR^n$ as a smooth mapping by
\begin{equation}
\label{eqn:gmap}
g(x):=h(|x|)\,x = \begin{cases}
    0, & |x|<2R,  \\
    x, & |x|>8R,
\end{cases}
\end{equation}
and consider, for $\theta \in \RR$, the related family of maps $\phi_\theta:\RR^n \to \RR^n$ defined by
\begin{equation}
\label{eqn:phimap}
\phi_\theta(x) = x + \theta g(x)
\end{equation}
We let $Df$ denote the derivative of a map $f:\RR^n \to \RR^n$, then
\[ Dg(x) = h(|x|) I + |x|^{-1} h'(|x|) x\cdot x^T. \]
Using diagonalization, It is easy to see that
\begin{equation}
\label{eqn:diagonalDg}
0\leq h(|x|)I \leq Dg(x) \leq ( h(|x|) + |x|h'(|x|) ) I \leq 3/2\, I
\end{equation}
where $A \leq B$ means $B-A$ is positive semi-definite and the last inequality is implied by \eqref{eqn:chi}.
Hence $\sup_{x\in\RR^n} \|Dg(x)\| \leq 3/2$, where $\|\cdot\|$ denotes the operator norm on the set of linear transformation on $\RR^n$.
We note that $D\phi_\theta(x) = I + \theta(Dg)(x)$, if $|\theta| < 2/3$, then $D\phi_\theta$ is invertible by a Neumann series argument,
\[(D\phi_\theta)^{-1} = \sum_{j=0}^\infty (-1)^j \theta^j (Dg)^j.\]
Hence $\phi_\theta$ is a diffeomorphism of $\RR^n$ for $|\theta|< 2/3$ by the inverse function theorem.

We should remark that all the above argument is valid when we extend the definition \eqref{eqn:phimap} of $\phi_\theta$ to $\theta\in\CC$. We have $\phi_\theta:\RR^n \to \CC^n$ is a diffeomorphism provided $|\theta|< 2/3$.

We now introduce the behavior of functions under the action of the maps $\phi_\theta$. We first define $U_\theta$ for $\theta\in\RR$ by
\begin{equation}
\label{eqn:Utheta}
( U_\theta f )(x) = J_\theta(x)^{1/2} f(\phi_\theta(x))
\end{equation}
where $J_\theta (x)$ is the Jacobian of $\phi_\theta$,
\begin{equation}
\label{eqn:Jtheta}
J_\theta(x)= \Det D\phi_\theta(x) =\Det ( I + \theta(Dg)(x) ).
\end{equation}
It is east to see that $U_\theta,\ \theta\in\RR$ is unitary on $L^2(\RR^n)$ with the inverse $U_\theta^{-1}$ given by
\begin{equation}
\label{eqn:UthetaInverse}
( U_\theta^{-1} f )(x) = J_\theta ( \phi_\theta^{-1}(x) )^{-1/2} f ( \phi_\theta^{-1} (x) ).
\end{equation}
\eqref{eqn:diagonalDg} and \eqref{eqn:Jtheta} show that $J_\theta(x)^{1/2}$ extends analytically to complex $\theta$ provided $\theta< 2/3$. Hence, to extend the operators $U_\theta$ from $\theta\in\RR$ to $\theta\in\CC$, at least for small $|\theta|$, we need to find a dense set of functions $f$ in $L^2(\RR^n)$ that can be analytically extended on a small complex neighborhood of $\RR^n$ in $\CC^n$ such that $f\circ\phi_\theta \in L^2(\RR^n)$. For that we introduce the set of analytic vectors in $L^2(\RR^n)$.

\begin{defi}
\label{defn:analyticVec}
Let $\mathcal{A}$ be the linear space of all entire functions $f(z)$ having the property that in any conical region $\C_\epsilon$,
\[ \C_\epsilon:=\{ z\in\CC^n \,:\,|\Im z| \leq (1-\epsilon)\Re z \}, \]
for any $\epsilon>0$, we have for any $k\in\NN$,
\[ \lim_{z \in \C_\epsilon \to \infty} |z|^k |f(z)| = 0. \]
The set of analytic vectors in $L^2(\RR^n)$ are the restrictions to $\RR^n$ of $\mathcal A$, which is also denoted by $\mathcal A$.
\end{defi}

We define a domain $D_{\beta_0}$ in $\CC$ by
\begin{equation}
\label{eqn:Dbeta0}
D_{\beta_0} = \{ \theta\in\CC\,:\, |\Re\theta| + |\Im\theta| < \tan\beta_0 \}.
\end{equation}
Note that $D_{\beta_0}\subset \{ z\in\CC\,:\,|z|<1/2 \}$ since $\beta_0\leq \pi/8$, \eqref{eqn:diagonalDg} and \eqref{eqn:Jtheta} guarantee that the Jacobian $J_\theta$ is uniformly bounded for $\theta\in D_{\beta_0}$. Then, we recall the following results in \cite[Proposition 17.10]{hislop2012}:
\begin{prop}
\label{prop:specdeform}
Let $\mathcal{U}\equiv \{ U_\theta\,:\,\theta\in D_{\beta_0} \}$ be a spectral deformation family associated with vector field $g$ defined by \eqref{eqn:gmap}. Then,
\begin{itemize}
    \item the map $ (\theta,f)\in D_{\beta_0} \times \mathcal A \to U_\theta f $ is an $L^2$-analytic map ;
    \item for any $ \theta\in D_{\beta_0} $, $ U_\theta \mathcal A $ is dense in $L^2(\RR^n)$.
\end{itemize}
\end{prop}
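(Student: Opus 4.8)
The plan is to prove the two assertions separately, in both cases working directly from the explicit formula $(U_\theta f)(x) = J_\theta(x)^{1/2} f(\phi_\theta(x))$ and from the uniform control on $J_\theta$ and on the geometry of $\phi_\theta$ recorded in \eqref{eqn:diagonalDg}--\eqref{eqn:Jtheta}; this reproves the cited proposition in our concrete situation. For analyticity, fix $f \in \mathcal A$. From $\phi_\theta(x) = (1+\theta h(|x|))\,x$ we get $|\Im\phi_\theta(x)|\le|\Im\theta|\,|x|$ and $|\Re\phi_\theta(x)|\ge(1-|\Re\theta|)\,|x|$, so for $\theta\in D_{\beta_0}$ the point $\phi_\theta(x)$ stays, uniformly in $\theta$ and $x$, in a fixed cone $\C_{\epsilon_0}$ with $\epsilon_0 = 1-\tan\beta_0 > 0$; together with the fact, from \eqref{eqn:diagonalDg}, that $|J_\theta(x)|$ is bounded above and bounded away from $0$ uniformly in $\theta\in D_{\beta_0}$, this gives $|(U_\theta f)(x)|\le C_N(1+|x|)^{-N}$ for every $N$, uniformly in $\theta\in D_{\beta_0}$, with $C_N$ controlled by the seminorm $\sup_{\C_{\epsilon_0}}(1+|z|)^N|f(z)|$. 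Hence $\theta\mapsto U_\theta f$ is a locally bounded $L^2$-valued map. For each fixed $x$, $\theta\mapsto(U_\theta f)(x)$ is holomorphic on $D_{\beta_0}$ --- $\theta\mapsto\phi_\theta(x)$ is affine, $f$ is entire, and $\theta\mapsto J_\theta(x)$ is a polynomial, nonvanishing on $D_{\beta_0}$ with $J_0(x)=1$ pinning down the branch of its square root --- so by differentiation under the integral sign, justified by the uniform pointwise bound, $\theta\mapsto\langle U_\theta f,\chi\rangle$ is holomorphic for every $\chi\in L^2$. A locally bounded, weakly holomorphic Banach-space-valued map is holomorphic, and linearity in $f$ together with the continuous dependence of $C_N$ on the seminorms of $f$ upgrades this to joint analyticity on $D_{\beta_0}\times\mathcal A$.

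Turning to density, note first that $\mathcal A$ is dense in $L^2(\RR^n)$: the Hermite functions $p_\alpha(x)e^{-|x|^2/2}$ lie in $\mathcal A$, since in any $\C_\epsilon$ one has $\Re(z^2)=|\Re z|^2-|\Im z|^2\ge(2\epsilon-\epsilon^2)|\Re z|^2\gtrsim|z|^2$, so these functions decay faster than any polynomial there, and they form an orthonormal basis of $L^2$. For real $\theta\in D_{\beta_0}$, $U_\theta$ is unitary, so $U_\theta\mathcal A$ is dense; the substance lies in general $\theta\in D_{\beta_0}$. Suppose $\psi\in L^2$ is orthogonal to $U_\theta\mathcal A$, and test against $f_a(x):=e^{-(x-a)^2/2}\in\mathcal A$, $a\in\CC^n$ (each $f_a$ decays faster than any polynomial in every $\C_\epsilon$). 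Expanding $(x-a)^2=\phi_\theta(x)^2-2a\cdot\phi_\theta(x)+a^2$ in $(U_\theta f_a)(x)$ and cancelling $e^{-a^2/2}$, the identity $\langle U_\theta f_a,\psi\rangle=0$ becomes
\[
\int_{\RR^n}\mu(x)\,e^{a\cdot\phi_\theta(x)}\,dx=0\qquad\text{for all }a\in\CC^n,
\]
where $\mu(x):=J_\theta(x)^{1/2}e^{-\phi_\theta(x)^2/2}\,\overline{\psi(x)}$. Since $\phi_\theta(x)^2=(1+\theta h(|x|))^2|x|^2$ and, because $\beta_0\le\pi/8$, $\Re\big((1+\theta h)^2\big)=\big(1+(\Re\theta-\Im\theta)h\big)\big(1+(\Re\theta+\Im\theta)h\big)\ge(1-\tan\beta_0)^2$ for $h\in[0,1]$ and $\theta\in D_{\beta_0}$, the function $\mu$ has Gaussian decay; so $\mu\in L^1$, has finite moments of all orders, and $a\mapsto\int_{\RR^n}\mu(x)e^{a\cdot\phi_\theta(x)}\,dx$ is entire on $\CC^n$. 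Differentiating at $a=0$ gives $\int\mu(x)\phi_\theta(x)^\alpha\,dx=0$ for every multi-index $\alpha$, that is, $\int_{\CC^n}P\,d\rho=0$ for every holomorphic polynomial $P$, where $\rho:=(\phi_\theta)_*(\mu\,dx)$ is a finite complex measure with Gaussian tails, supported on $M_\theta:=\phi_\theta(\RR^n)$. By \eqref{eqn:chi}, \eqref{eqn:diagonalDg} one checks that $M_\theta=\{\zeta\in\CC^n:\Im\zeta=G(\Re\zeta)\}$ is a $C^\infty$ graph over $\RR^n$ with $G$ Lipschitz of constant $<1$ --- hence $M_\theta$ is totally real and polynomially convex --- and that $\phi_\theta$ is injective, since already $\Re\phi_\theta=\phi_{\Re\theta}$ is a diffeomorphism of $\RR^n$. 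The entire function $a\mapsto\int_{M_\theta}e^{a\cdot\zeta}\,d\rho(\zeta)$ therefore vanishes identically, and polynomial approximation on the totally real polynomially convex submanifold $M_\theta$ --- applied to an exhaustion by compacts and combined with the Gaussian decay of $\rho$ --- forces $\rho=0$; injectivity of $\phi_\theta$ then yields $\mu\equiv0$, so $\psi=0$. Thus $U_\theta\mathcal A$ is dense.

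The main obstacle is this last step: passing from the vanishing of all holomorphic moments of $\rho$ to $\rho=0$. Over $\RR^n$ this is just Fourier inversion, but $\rho$ is carried by the twisted real contour $M_\theta$ and only holomorphic test functions are at one's disposal, so one genuinely needs the complex-analytic input that holomorphic polynomials are uniformly dense in $C(K)$ for compact totally real polynomially convex $K\subset\CC^n$. This is also where the structural hypotheses are consumed: \eqref{eqn:chi} and $\beta_0\le\pi/8$ are exactly what make $M_\theta$ a totally real graph of Lipschitz constant below $1$ and $\mu$ genuinely Gaussian rather than merely bounded. (If $V$, and hence $g$, were globally dilation analytic, $\mathcal A$ would be invariant under the unitary group $\{U_s:s\in\RR\}$ and one could instead note that $\theta'\mapsto\langle U_{\theta'}f,\psi\rangle$ vanishes on a whole horizontal segment through $\theta$ and conclude by the identity theorem; the cutoff in $g$ destroys that invariance and forces the route above.)
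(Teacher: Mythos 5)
The paper offers no proof of this proposition --- it cites Hislop--Sigal, Proposition 17.10 verbatim --- so you are not reconstructing an argument but supplying one from scratch. Your treatment of the first bullet (analyticity) is essentially correct: the cone and Jacobian estimates extracted from \eqref{eqn:diagonalDg}, \eqref{eqn:Dbeta0}, and $\beta_0\le\pi/8$ give a uniform bound $|(U_\theta f)(x)|\le C_N(1+|x|)^{-N}$, pointwise holomorphy in $\theta$ is clear (with $J_0\equiv 1$ on simply connected $D_{\beta_0}$ fixing the branch of $J_\theta^{1/2}$), and weak plus locally bounded implies strong holomorphy for Banach-valued maps.

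The density argument, however, has a genuine gap at its last step. Testing against the Gaussians $f_a$, $a\in\CC^n$, correctly produces a finite complex measure $\rho=(\phi_\theta)_*(\mu\,dx)$ on $M_\theta=\phi_\theta(\RR^n)$ with Gaussian decay and $\int e^{a\cdot\zeta}\,d\rho=0$ for all $a$, hence with all holomorphic polynomial moments zero; and $M_\theta$ is indeed a totally real Lipschitz graph of constant $<1$ (one needs $\tfrac{3}{2}\tan\beta_0<1$, which $\beta_0\le\pi/8$ provides via \eqref{eqn:diagonalDg}). But the passage from ``$\int P\,d\rho=0$ for all holomorphic $P$'' to ``$\rho=0$'' is not licensed by polynomial approximation on compacts. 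If $P_k\to\varphi$ uniformly on a compact $K\subset M_\theta$ containing $\supp\varphi$, then $\int\varphi\,d\rho=\int_K(\varphi-P_k)\,d\rho-\int_{M_\theta\setminus K}P_k\,d\rho$, and the tail term is uncontrolled: uniform closeness on $K$ gives no bound on $P_k$ off $K$, the degrees $\deg P_k$ will in general be unbounded, and the Gaussian decay of $\rho$ helps for any \emph{fixed} polynomial but not for such a sequence. ``Exhaustion by compacts combined with the Gaussian decay of $\rho$'' therefore does not close the argument as stated. A correct route would need a genuinely global (weighted) polynomial approximation theorem on $M_\theta$, or a different extraction from $\hat\rho\equiv 0$ --- for instance a $\bar\partial$-potential / Cauchy--Fantappi\`e argument, or one exploiting that $\phi_\theta$ is the identity near $0$ and a fixed complex dilation near infinity. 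More mildly, the two facts you do invoke --- polynomial convexity of totally real Lipschitz graphs with constant $<1$, and density of polynomials in $C(K)$ for compact polynomially convex totally real $K$ --- are substantial theorems of H\"ormander--Wermer type that deserve explicit citation rather than being folded into ``one checks.''
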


We conclude this section with some properties about the deformation of $\RR^n\subset \CC^n$ under the map $\phi_\theta$ provided $\theta\in D_{\beta_0}$. We recall that $ \phi_\theta : \RR^n \to \CC^n $ is injective with the Jacobian $J_\theta \neq 0$ provided $|\theta|<2/3$. Hence $\phi_\theta(\RR^n)\subset \CC^n$ is an $n$-dimensional totally real submanifolds, see \cite[\S 4.5]{res}. Let $\Gamma_{a(\theta)}=\phi_\theta(\RR^n)$, where $a(\theta)\in (-\pi/2,\pi/2)$ is defined by
\begin{equation}
\label{eqn:tiltheta}
a(\theta) = \arg ( 1 + \theta ).
\end{equation}
In the literature about complex scaling, one can define $L^2(\Gamma_{a(\theta)})$ with volume element $|dw|=|J_\theta(x)|\,dx$ where $w=\phi_\theta(x)$ are the coordinates on $\Gamma_{a(\theta)}$, see \cite[\S 2.7, \S 4.5]{res} for details. Then we have the following:

\begin{prop}
\label{rem:Imagephitheta}
For any $\theta\in D_{\beta_0}$, $\Gamma_{a(\theta)}$ satisfies
\begin{gather}
\label{eqn:Gammatheta}
\begin{gathered}
\Gamma_{a(\theta)} \cap B_{\CC^n}(0,2R) = B_{\RR^n}(0,2R),  \\
\Gamma_{a(\theta)} \cap \CC^n \setminus B_{\CC^n}(0,12R) = e^{i a(\theta)}\RR^n \cap \CC^n \setminus B_{\CC^n}(0,12R),  \\
\Gamma_{a(\theta)} \subset \RR^n\cup \C_{\beta_0}^R.
\end{gathered}
\end{gather}
Furthermore, the spectral deformation operator $U_\theta$ extends to an isometry:
\[ U_\theta : L^2(\Gamma_{a(\theta)}) \to L^2(\RR^n). \]
\end{prop}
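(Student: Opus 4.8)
The plan is to reduce everything to one-variable estimates by observing that $g(x)=h(|x|)x$ is always a nonnegative real multiple of $x$, so $\phi_\theta$ acts as a \emph{complex radial rescaling}:
\[
\phi_\theta(x)=\bigl(1+\theta h(|x|)\bigr)x=\bigl(|x|+\theta\psi(|x|)\bigr)\widehat x,\qquad \widehat x:=x/|x|,\ \ \psi(s):=sh(s).
\]
Since $\widehat x$ is a real unit vector, $|\phi_\theta(x)|_{\CC^n}=\bigl|\,|x|+\theta\psi(|x|)\,\bigr|$, and the three inclusions in \eqref{eqn:Gammatheta} follow from elementary properties of the scalar function $\psi$: it vanishes on $[0,2R]$, equals $s$ on $[8R,\infty)$, satisfies $0\le\psi(s)\le s$, and has $0\le\psi'(s)=h(s)+sh'(s)\le 3/2$ by \eqref{eqn:chi}. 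I will use throughout that $\theta\in D_{\beta_0}$ forces $|\theta|\le|\Re\theta|+|\Im\theta|<\tan\beta_0\le\tan(\pi/8)=\sqrt2-1<\tfrac12$.

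For the first inclusion, $\phi_\theta$ is the identity on $\{|x|<2R\}$, so it suffices to show $|\phi_\theta(x)|\ge 2R$ when $|x|\ge 2R$; integrating $\psi'\le\tfrac32$ from $2R$ (where $\psi=0$) gives $\psi(s)\le\tfrac32(s-2R)$, hence $|\phi_\theta(x)|\ge|x|-\tfrac32|\theta|(|x|-2R)=(1-\tfrac32|\theta|)|x|+3R|\theta|\ge 2R$, the last step because the middle expression is increasing in $|x|$ (as $|\theta|<\tfrac12<\tfrac23$) and equals $2R$ at $|x|=2R$. For the second inclusion, $|\phi_\theta(x)|\le(1+|\theta|)|x|<\tfrac32|x|$, so $|\phi_\theta(x)|\ge 12R$ forces $|x|>8R$, where $\phi_\theta(x)=(1+\theta)x=|1+\theta|e^{ia(\theta)}x\in e^{ia(\theta)}\RR^n$; conversely, for $w=e^{ia(\theta)}y$ with $y\in\RR^n$, $|y|\ge 12R$, the point $x:=y/|1+\theta|\in\RR^n$ satisfies $|x|\ge 12R/(1+\tan\beta_0)\ge 8R$ (using $\tan\beta_0\le\tfrac12$), so $\phi_\theta(x)=(1+\theta)x=w$.

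For the third inclusion, write $w=\phi_\theta(x)=\mu x$ with $\mu=1+\theta h(|x|)$. If $h(|x|)=0$ then $w=x\in\RR^n$; otherwise $|x|>2R$ and $\Re\mu=1+h(|x|)\Re\theta\ge 1-\tan\beta_0>\tfrac12$, so $|\Re w|=(\Re\mu)|x|>\tfrac12|x|>R$ and $|\Im w|=h(|x|)|\Im\theta|\,|x|$; the cone bound $|\Im w|<\tan\beta_0|\Re w|$ reduces to $h(|x|)|\Im\theta|<\tan\beta_0\bigl(1+h(|x|)\Re\theta\bigr)$, which I would check by cases on $\sgn(\Re\theta)$: immediate when $\Re\theta\ge 0$ (right side $\ge\tan\beta_0>|\Im\theta|\ge$ left side), and, when $\Re\theta<0$, equivalent to $h(|x|)\bigl(|\Im\theta|+\tan\beta_0|\Re\theta|\bigr)<\tan\beta_0$, hence to $|\Re\theta|+|\Im\theta|<\tan\beta_0$ using $h(|x|)\le1$ and $\tan\beta_0<1$. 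Thus $w\in\C_{\beta_0}^R$.

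Finally, the isometry statement is essentially the definition of $L^2(\Gamma_{a(\theta)})$: since $|J_\theta(x)^{1/2}|^2=|J_\theta(x)|$ and $|dw|=|J_\theta(x)|\,dx$ under $w=\phi_\theta(x)$, one gets $\|U_\theta f\|_{L^2(\RR^n)}^2=\int_{\RR^n}|J_\theta(x)|\,|f(\phi_\theta(x))|^2\,dx=\int_{\Gamma_{a(\theta)}}|f(w)|^2\,|dw|=\|f\|_{L^2(\Gamma_{a(\theta)})}^2$, and surjectivity is clear as $\phi_\theta:\RR^n\to\Gamma_{a(\theta)}$ is a bijection with $J_\theta\ne 0$. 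I expect the only real obstacle to be the bookkeeping in the three inclusions --- one has to confirm that $\beta_0\le\pi/8$ together with the normalization \eqref{eqn:chi} (the constant $3/2$) make the radii $2R<8R<12R$ fit together --- rather than any single deep estimate, once the radial-rescaling identity above is in hand.
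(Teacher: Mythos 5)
Your proof is correct and follows essentially the same route as the paper: the radial factorization $\phi_\theta(x)=(1+\theta h(|x|))x$ underlies both the paper's cone estimate $\frac{|\Im\phi_\theta(x)|}{|\Re\phi_\theta(x)|}\le\frac{|\Im\theta|}{1-|\Re\theta|}<\tan\beta_0$ and yours (your case split on $\sgn(\Re\theta)$ is just a more explicit way of getting the same bound), and the isometry argument via $|dw|=|J_\theta|\,dx$ is identical. The paper dispatches the first two set identities with ``it is easy to see,'' whereas you supply the supporting one-variable estimates ($\psi(s)\le\tfrac32(s-2R)$ for the inner ball, $|1+\theta|<3/2$ for the outer region); that is a useful filling-in of detail but not a different approach.

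One small imprecision worth fixing: in the converse half of the second identity you write $|x|=|y|/|1+\theta|\ge 12R/(1+\tan\beta_0)\ge 8R$, but you need $|x|>8R$ strictly so that $h(|x|)=1$ and hence $\phi_\theta(x)=(1+\theta)x$. This is in fact what you get, since $\beta_0\le\pi/8$ gives $\tan\beta_0\le\sqrt2-1<1/2$, so $|1+\theta|<3/2$ strictly and $|x|>8R$; just replace the nonstrict $\ge$ by a strict inequality (or invoke $\tan\beta_0<1/2$ rather than $\le 1/2$).
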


\begin{proof}
In view of \eqref{eqn:gmap} and \eqref{eqn:phimap}, it is easy to see that $\Gamma_{a(\theta)}=\phi_\theta(\RR^n)$ satisfies the first two equations of \eqref{eqn:Gammatheta}. For $\theta\in D_{\beta_0}$, we have
\[\frac{|\Im \phi_\theta(x)|}{|\Re \phi_\theta(x)|} = \frac{|\Im\theta| |\chi(|x|)|}{|1 + \Re\theta \chi(|x|)|} \leq \frac{|\Im\theta|}{1-|\Re\theta|} < \tan\beta_0,\]
where the last inequality is implied by \eqref{eqn:Dbeta0}. Moreover, $\phi_\theta(x)=x$ for $|x|<2R$, and $|\Re\phi_\theta(x)|\geq (1-|\Re\theta|)|x| > (1-\tan\beta_0)|x| > |x|/2 \geq R$ provided $|x|\geq 2R$, since $\beta_0\leq \pi/8$. Hence $\Gamma_{a(\theta)} \subset \RR^n\cup \C_{\beta_0}^R$.

Now we assume that $\theta\in D_{\beta_0}$, for any $f\in L^2(\Gamma_{a(\theta)})$, we can define $U_\theta f$ on $\RR^n$ by \eqref{eqn:Utheta}. To see $U_\theta f \in L^2(\RR^n)$, we compute directly:
\begin{equation}
\label{eqn:Uthetaisometry}
\begin{split}
    \int_{\RR^n} |U_\theta f(x)|^2 \, dx
    & = \int_{\RR^n} |J_\theta(x)^{1/2} f(\phi_\theta(x))|^2\,dx  \\
    & = \int_{\RR^n} |f(\phi_\theta(x))|^2 |J_\theta(x)|\,dx \\
    & = \int_{\Gamma_{a(\theta)}} |f(w)|^2 \,|dw| < \infty ,
\end{split}
\end{equation}
which also shows that $\|U_\theta f\|_{L^2(\RR^n)} = \|f\|_{L^2(\Gamma_{a(\theta)})}$ and $U_\theta$ is one-to-one. It remains to show that $U_\theta$ is onto. For $g\in L^2(\RR^n)$, let $G(w)=J_\theta(\phi_\theta^{-1}(w))^{-1/2}g(\phi_\theta^{-1}(w))$, $w\in \Gamma_{a(\theta)}$. We can follow \eqref{eqn:Uthetaisometry} to derive
\[ \int_{\Gamma_{a(\theta)}} |G(w)|^2 \,|dw| = \int_{\RR^n} |g(x)|^2 \,dx.\]
Hence $G\in L^2(\Gamma_{a(\theta)})$, then we conclude that $U_\theta$ is onto since $U_\theta G = g$.
\end{proof}

\section{resonances}
\label{Resonance}

We will follow Aguilar-Balslev-Combes-Simon theory to define the resonances of $H\equiv-\Delta+V$, see \cite[\S 16, \S 18]{hislop2012} and those resonances in a conic neighborhood of the real axis can be identified with the eigenvalues of certain non-self-adjoint operators associated with $H$. Using the analytic vectors $\mathcal A$, we recall the definition:

\begin{defi}
\label{defi:resonances}
The resonances of $H$ associated with analytic vectors $\mathcal{A}$ are the poles of the meromorphic continuations of all matrix elements $\langle f,R_H(z)g\rangle$ ($R_H(z)$ denotes the resolvent of $H$), $f,g\in\mathcal{A}$, from $\{z\in\CC\,:\,\Im z > 0\}$ to $\{z\in\CC\,:\,\Im z \leq 0\}$.
\end{defi}

First we introduce the spectral deformed Schr\"odinger operators $H(\theta)$ of $H$ associated with the spectral deformation family $\mathcal U = \{U_\theta\,:\,\theta\in D_{\beta_0} \}$. Consider, for $\theta\in D_{\beta_0}\cap\RR$,
\begin{equation}
\label{eqn:Htheta}
H(\theta): = U_\theta H U_\theta^{-1} = p_\theta^2 + V(\phi_\theta(x)),
\end{equation}
where
\begin{equation}
\label{eqn:ptheta}
p_\theta^2 = U_\theta p^2 U_\theta^{-1},\ \ p_j\equiv \frac{1}{i} \frac{\partial}{\partial x_j}.
\end{equation}
In view of Proposition \ref{rem:Imagephitheta}, we can extend $H(\theta)$ to $\theta\in D_{\beta_0}$. We recall the following basic facts about $p_\theta^2,\ \theta\in D_{\beta_0}$ in \cite[\S 18]{hislop2012}:
\begin{prop}
\label{prop:ptheta}
Let $p_\theta^2$ be as defined in \eqref{eqn:ptheta}, then $p_\theta^2,\ \theta\in D_{\beta_0}$ is an analytic family of operators with domain $D(p_\theta^2)=H^2(\RR^n)$. For the spectrum, we have $\sigma(p_\theta^2) = \sigma_{\textrm{ess}}(p_\theta^2) = e^{-2i a(\theta)}[0,\infty)$.
\end{prop}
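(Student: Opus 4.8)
The plan is to make $p_\theta^2$ completely explicit as a second-order differential operator, and then read off the domain, the analyticity in $\theta$, and the spectrum from that formula. Conjugating $-\Delta$ by the change of variables $y=\phi_\theta(x)$ together with the half-density factor $J_\theta^{1/2}$ from \eqref{eqn:Utheta}, one finds that $p_\theta^2$ has the divergence-type form
\[ p_\theta^2 u = J_\theta^{1/2}\Bigl( -\sum_{j,k} G_\theta^{jk}\,\partial_j\partial_k u' - \sum_k b_\theta^{k}\,\partial_k u' \Bigr), \qquad u' = J_\theta^{-1/2}u, \quad G_\theta = (D\phi_\theta)^{-1}(D\phi_\theta)^{-T}, \]
with $b_\theta$ built from second derivatives of $\phi_\theta$. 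Since $(D\phi_\theta)^{-1}=\sum_j(-\theta)^j(Dg)^j$ and $J_\theta=\Det(I+\theta Dg)\ne 0$ for $\theta\in D_{\beta_0}$, all coefficients are smooth, uniformly bounded, and analytic in $\theta$ on $D_{\beta_0}$; by \eqref{eqn:gmap} they are constant for $|x|>8R$, where in fact $p_\theta^2=(1+\theta)^{-2}(-\Delta)$ exactly. I would then check uniform ellipticity: diagonalising $Dg(x)=\sum_i\lambda_i(x)P_i(x)$ with $\lambda_i\in[0,3/2]$ by \eqref{eqn:diagonalDg}, the principal symbol is $\langle G_\theta(x)\xi,\xi\rangle=\sum_i(1+\theta\lambda_i)^{-2}|P_i\xi|^2$, and a short computation from $|\Re\theta|+|\Im\theta|<\tan\beta_0$, $\beta_0\le\pi/8$ (so that $\lambda_i|\theta|<1$) gives $\Re(1+\theta\lambda_i)^{-2}\ge(1-\tfrac32\tan\beta_0)^2>0$; hence $\Re\langle G_\theta(x)\xi,\xi\rangle\ge c|\xi|^2$ uniformly in $x$ and $\theta\in D_{\beta_0}$. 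This is exactly where the assumption $\beta_0\le\pi/8$ is used.

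Next I would settle the domain and analyticity. Uniform ellipticity with bounded coefficients that are constant near infinity gives the a priori bound $\|u\|_{H^2}\le C(\|p_\theta^2 u\|_{L^2}+\|u\|_{L^2})$; together with $p_\theta^2:H^2\to L^2$ bounded and $C_c^\infty$ being a core, this identifies $D(p_\theta^2)=H^2(\RR^n)$ and shows $p_\theta^2$ is closed. Because $\theta\mapsto p_\theta^2 u\in L^2$ is analytic for each fixed $u\in H^2$ (the coefficients depend analytically on $\theta$), the family $\{p_\theta^2\}_{\theta\in D_{\beta_0}}$ is an analytic family of type (A) in Kato's sense; for $\theta\in D_{\beta_0}\cap\RR$ it is unitarily equivalent to $-\Delta$ via $U_\theta$.

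For the spectrum I would compare with the model operator $q_\theta^2:=(1+\theta)^{-2}(-\Delta)$ with domain $H^2$, which has $\sigma(q_\theta^2)=\sigma_{\mathrm{ess}}(q_\theta^2)=(1+\theta)^{-2}[0,\infty)=e^{-2ia(\theta)}[0,\infty)$ (here $\Re(1+\theta)>0$ so $a(\theta)\in(-\pi/2,\pi/2)$, and scaling $[0,\infty)$ by the positive number $|1+\theta|^{-2}$ changes nothing). Since $p_\theta^2$ and $q_\theta^2$ have identical coefficients for $|x|>8R$, the difference of their resolvents is compact (resolvent identity, with Rellich's theorem applied on the bounded region $|x|\le 9R$), so Weyl's theorem gives $\sigma_{\mathrm{ess}}(p_\theta^2)=e^{-2ia(\theta)}[0,\infty)$. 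It then remains to rule out discrete spectrum: on the connected set $\CC\setminus e^{-2ia(\theta)}[0,\infty)$ the operator $p_\theta^2-z$ is Fredholm of index $0$ and invertible for $z$ far from the ray (Neumann series from the ellipticity of Step 1), so by the analytic Fredholm theorem $(p_\theta^2-z)^{-1}$ is meromorphic there, with poles exactly the discrete eigenvalues; since $D_{\beta_0}$ is star-shaped about $0$, the analytic family $t\mapsto p_{t\theta}^2$, $t\in[0,1]$, connects $p_\theta^2$ to $-\Delta$, whose discrete spectrum is empty, and invariance of the discrete spectrum under the deformation (the Aguilar--Balslev--Combes--Simon mechanism, cf.\ \cite[\S 18]{hislop2012}, \cite[\S 4.5]{res}) forces $\sigma(p_\theta^2)\setminus\sigma_{\mathrm{ess}}(p_\theta^2)=\emptyset$. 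Alternatively one can argue directly: an $H^2$ eigenfunction would be real-analytic for $|x|>8R$ (constant-coefficient elliptic equation) and decaying, hence extend holomorphically off a ball, and a Rellich-type uniqueness argument (together with the absence of outgoing eigensolutions for $-\Delta$) forces it to vanish. Combining the three steps yields $\sigma(p_\theta^2)=\sigma_{\mathrm{ess}}(p_\theta^2)=e^{-2ia(\theta)}[0,\infty)$.

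The explicit form, the ellipticity estimate, and the essential-spectrum computation are all routine. The one genuinely delicate point is the last part — showing that no eigenvalues are created as $\theta$ leaves the real axis — and it is there that the structure of the deformation family (Proposition~\ref{prop:specdeform}) and the fact that $-\Delta$ has neither bound states nor resonances have to be brought to bear.
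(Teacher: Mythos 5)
The paper does not prove this proposition: it is stated as a recalled fact with a citation to \cite[\S 18]{hislop2012}, and the resolvent estimate needed in the sequel is quoted from \cite[Theorem 4.35]{res} in Proposition~\ref{prop:normRtheta}. So there is no in-paper argument to compare against. Your proof is a valid self-contained alternative, and its skeleton is sound: the principal symbol $(D\phi_\theta)^{-2}$ is correct (since $Dg$ is symmetric), the diamond constraint together with $0\le\lambda_i\le 3/2$ and $\tfrac32\tan\beta_0<1$ does put $\Re(1+\theta\lambda_i)^{-2}$ uniformly in the right half-plane (your constant $(1-\tfrac32\tan\beta_0)^2$ is a valid though not sharp lower bound, after checking the minimum over the diamond sits at $w=i\,\tfrac32\tan\beta_0$), the type-(A) analyticity with domain $H^2(\RR^n)$ then follows from elliptic estimates and constancy of coefficients near infinity, and the final step (no poles of $\langle f,(-\Delta-z)^{-1}g\rangle$ off $[0,\infty)$, hence no discrete eigenvalues of $p_\theta^2$) is the correct mechanism.

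Two steps should be spelled out more carefully. First, the parenthetical ``(resolvent identity, with Rellich's theorem applied on the bounded region $|x|\le 9R$)'' is, as written, misleading: the operator $(p_\theta^2-q_\theta^2)(q_\theta^2-z)^{-1}$ is \emph{not} compact on $L^2$, because $p_\theta^2-q_\theta^2$ is a genuine second-order perturbation and consumes the two derivatives gained by the resolvent — testing on high-frequency bumps localized in $B(0,8R)$ produces a bounded, non-precompact image. The resolvent difference $R_p-R_q=-R_p(p_\theta^2-q_\theta^2)R_q$ \emph{is} compact, but the compactness must be pushed through the outer factor: insert a cutoff $\tilde\chi\equiv 1$ near $B(0,8R)$, write the difference as $[R_p\tilde\chi]\,(p_\theta^2-q_\theta^2)\,[\tilde\chi R_q]$, and observe that $R_{p_\theta^2}(z)\,\tilde\chi$ is compact because its adjoint $\tilde\chi\,R_{p_\theta^{2\,*}}(\bar z)$ maps $L^2\to H^2$ followed by a compactly supported multiplication; that is where Rellich enters. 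Second, the homotopy $t\mapsto p_{t\theta}^2$ does not by itself exclude discrete spectrum, because the ray $e^{-2ia(t\theta)}[0,\infty)$ rotates with $t$ and could sweep across a putative eigenvalue as $t\to 0$; the argument you add at the end — that for all $\theta\in D_{\beta_0}$ the continued matrix elements coincide with the pole-free $\langle f,(-\Delta-z)^{-1}g\rangle$ — is the robust version and should be promoted from ``alternatively'' to the main line. For context, the route the paper actually uses avoids both of these delicacies: the uniform bound $R_\theta(z)=\mathcal O_\delta(1/|z|)$ of Proposition~\ref{prop:normRtheta} (from \cite[Theorem 4.35]{res}, via the explicit free resolvent on the deformed contour $\Gamma_{a(\theta)}$) already shows the resolvent set contains $\CC\setminus e^{-2ia(\theta)}[0,\infty)$, so the inclusion $\sigma(p_\theta^2)\subset e^{-2ia(\theta)}[0,\infty)$ comes for free and only the ``$\supset$'' direction needs the Weyl argument.
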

\noindent
And for the resolvent $R_\theta(z):=(p_\theta^2 - z)^{-1}$ we have:
\begin{prop}
\label{prop:normRtheta}
For $\delta>0$, we have
\begin{equation}
\label{eqn:normRtheta}
R_\theta(z) = \mathcal O_\delta (1/|z|) : L^2(\RR^n)\to L^2(\RR^n) ,\ \ -2a(\theta)+\delta < \arg z < 2\pi-2a(\theta)-\delta.
\end{equation}
\end{prop}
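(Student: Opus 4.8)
The plan is to establish the equivalent a priori estimate $\|u\|_{L^2}\le C_\delta|z|^{-1}\|(p_\theta^2-z)u\|_{L^2}$ for $u\in H^2(\RR^n)$ and $z$ in the stated sector; since $z\notin\sigma(p_\theta^2)$ there by Proposition \ref{prop:ptheta}, this yields the resolvent bound. For $|z|$ in a compact subset of the sector the estimate is immediate from continuity of $R_\theta(z)$ on the resolvent set, and the regime $z\to 0$ is a routine low-energy estimate (comparison with the normal operator $(1+\theta)^{-2}(-\Delta)$ obtained by blowing down the non-flat region $\{|x|\le 8R\}$), so the substance lies in $|z|\to\infty$, which I would handle semiclassically. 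Put $h:=|z|^{-1/2}$, $\zeta:=h^2z$, so that $|\zeta|=1$, $\arg\zeta=\arg z$, and $p_\theta^2-z=h^{-2}(P_\theta(h)-\zeta)$ with $P_\theta(h):=h^2p_\theta^2=U_\theta(-h^2\Delta)U_\theta^{-1}$; it then suffices to prove $\|(P_\theta(h)-\zeta)u\|_{L^2}\ge c_\delta\|u\|_{L^2}$ uniformly for small $h$ and $|\zeta|=1$ with $\arg\zeta$ in the sector.

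The semiclassical principal symbol of $P_\theta(h)$ is $p_\theta(x,\xi)=\sum_i(1+\theta\lambda_i(x))^{-2}(e_i(x)\cdot\xi)^2$, where the $\lambda_i(x)\in[0,3/2]$ are the eigenvalues of $Dg(x)$ with orthonormal eigenvectors $e_i(x)$, by \eqref{eqn:diagonalDg}. Using \eqref{eqn:chi} and $\beta_0\le\pi/8$ one checks that $|\arg(1+\theta\lambda)|<\pi/4$ for $\lambda\in[0,3/2]$; consequently $\Re p_\theta\ge c|\xi|^2$ everywhere, $\arg p_\theta$ is confined to a closed sub-interval of $(-\pi/2,0]$ containing $0$ and $-2a(\theta)$ (reducing to $\{0\}$ on $\{|x|\le 2R\}$ and to $\{-2a(\theta)\}$ on $\{|x|\ge 8R\}$), and — the decisive sign — $\Im p_\theta(x,\xi)<0$ for $|x|>2R$, $\xi\neq0$ (assuming $a(\theta)>0$; $a(\theta)=0$ is the self-adjoint case and $a(\theta)<0$ the mirror image). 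For $|\xi|\ge C_0$ the operator $P_\theta(h)-\zeta$ is elliptic, so a parametrix reduces the estimate to a neighbourhood of the characteristic set $\Sigma_\zeta:=\{p_\theta=\zeta\}$. This set is empty unless $\arg\zeta$ lies in the above small interval; and for $\arg\zeta$ in the sector but bounded away from that interval, pairing $(P_\theta(h)-\zeta)u$ with $u$ and invoking $\Re p_\theta\ge c|\xi|^2$, the sign of $\Im p_\theta$ and sharp Gårding already gives the bound. This disposes of everything except a small sector adjacent to the spectral ray $e^{-2ia(\theta)}[0,\infty)$ — which contains the positive real axis precisely when $a(\theta)>\delta/2$.

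On that remaining sector $\Sigma_\zeta\neq\emptyset$ and $P_\theta(h)-\zeta$ is non-elliptic on it, and I would close the argument with a positive-commutator (Mourre-type) estimate, resting on two structural facts. First, the Hamilton flow of $\Re p_\theta$ is non-trapping: it is the flat, time-rescaled geodesic flow — straight lines — on $\{|x|\le 2R\}$ and on $\{|x|\ge 8R\}$, and a controlled perturbation of it (again by \eqref{eqn:chi}, $\beta_0\le\pi/8$) on the shell $\{2R<|x|<8R\}$, so no null bicharacteristic is trapped; when $\arg\zeta=0$ one even has $\Sigma_\zeta\subset\{|x|\le 2R,\ |\xi|^2=\zeta\}$, through which the straight-line trajectories leave $\{|x|\le 2R\}$ in time $\le 2R\,\zeta^{-1/2}$. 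Second, $\Im p_\theta<0$ on $\{|x|>2R,\ \xi\neq0\}$, so $P_\theta(h)$ is dissipative there and a trajectory which exits the ball leaves $\Sigma_\zeta$ at once. One then constructs a self-adjoint $G^w$ whose commutator with $P_\theta(h)$ is positive microlocally near $\Sigma_\zeta$ — inside the ball one may take $G$ proportional to the dilation symbol $x\cdot\xi$, since $\tfrac{i}{h}[P_\theta(h),\tfrac12(x\cdot hD+hD\cdot x)]=2P_\theta(h)$ there — patched to the dissipativity of $P_\theta(h)$ on $\{|x|>2R\}$; testing $(P_\theta(h)-\zeta)u$ against $\tfrac{i}{h}G^wu$ then controls $\|u\|^2$ modulo errors absorbable for small $h$, giving the loss-free bound $\|(P_\theta(h)-\zeta)u\|\ge c_\delta\|u\|$. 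This propagation estimate — reconciling the interior Mourre commutator, the exterior dissipativity, and the Gårding errors in the transition shell — is the only genuinely delicate point; it is an instance of the principle ``non-trapping energy surface plus absorption at infinity implies an $\mathcal{O}(1)$ semiclassical resolvent'', and could alternatively be quoted from the complex-scaling chapter of \cite{res}.
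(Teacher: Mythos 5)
Your strategy is correct but genuinely different from the paper's proof, which is a one-liner: using the isometry $U_\theta\colon L^2(\Gamma_{a(\theta)})\to L^2(\RR^n)$ established in Proposition \ref{rem:Imagephitheta}, the paper writes $p_\theta^2 = U_\theta(-\Delta_{a(\theta)})U_\theta^{-1}$, so that $\|R_\theta(z)\|_{L^2(\RR^n)\to L^2(\RR^n)} = \|(-\Delta_{a(\theta)}-z)^{-1}\|_{L^2(\Gamma_{a(\theta)})\to L^2(\Gamma_{a(\theta)})}$, and then cites the scaled free-resolvent bound of \cite[Theorem~4.35]{res} directly. You instead sketch a self-contained semiclassical proof of essentially that cited theorem: rescale by $h=|z|^{-1/2}$, compute the symbol $p_\theta(x,\xi)=\sum_i(1+\theta\lambda_i(x))^{-2}(e_i(x)\cdot\xi)^2$, use $\Re p_\theta\gtrsim|\xi|^2$ and $\Im p_\theta\le 0$ for ellipticity/G\aa rding away from the spectral ray, and close with a non-trapping positive-commutator estimate near the characteristic set. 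Your symbol computation and the angle bound $|\arg(1+\theta\lambda)|<\pi/4$ (from $\beta_0\le\pi/8$, $\lambda\in[0,3/2]$) are correct, and the final escape to \cite{res} is an honest acknowledgment that you are re-deriving what the paper simply quotes. The paper's route is much shorter and buys everything from the single structural observation that $U_\theta$ is an isometry onto $L^2(\Gamma_{a(\theta)})$; your route is longer but does not require that reduction and would generalize to situations without an exact isometric conjugation.

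Two small technical cautions if you wanted to flesh yours out: (i) the assertion that $\Im p_\theta(x,\xi)<0$ for all $|x|>2R$, $\xi\neq0$ is not literally true — $Dg(x)$ has eigenvalues $h(|x|)+|x|h'(|x|)$ in the radial direction and $h(|x|)$ in the $n-1$ transverse directions, and $h$ can vanish on part of $(2R,8R)$ while $\xi$ can be orthogonal to the eigenvectors carrying the positive $\lambda_i$, so dissipation is only nonnegative pointwise and must be combined with the commutator more carefully on the shell; and (ii) the low-energy regime $z\to0$ in the sector $\arg z\in(-2a(\theta)+\delta,\,2\pi-2a(\theta)-\delta)$ is dismissed as ``routine'', but for a non-normal operator the $1/\!\dist(z,\sigma)$ bound is not automatic and is precisely part of what \cite[Theorem~4.35]{res} supplies; your proposed blow-down/normal-operator comparison would need to be spelled out or replaced by the scaling argument relating energy $z$ to a rescaled cutoff radius.
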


\begin{proof}
We note that in the notation of Proposition \ref{rem:Imagephitheta},
\begin{equation}
\label{eqn:ptheta2}
    p_\theta^2 = U_\theta (-\Delta_{a(\theta)}) U_\theta^{-1},
\end{equation}
where $-\Delta_{a(\theta)}: H^2(\Gamma_{a(\theta)})\to L^2(\Gamma_{a(\theta)})$ is defined as the restriction of $\Delta_z$ to the totally real submanifold $\Gamma_{a(\theta)}$, see \cite[\S 4.5]{res}. Since $U_\theta:L^2(\Gamma_{a(\theta)}) \to L^2(\RR^n)$ and $U_\theta^{-1}:L^2(\RR^n) \to L^2(\Gamma_{a(\theta)})$ are both isometries, we have
\[ \|( p_\theta^2 - z )^{-1}\|_{L^2(\RR^n)\to L^2(\RR^n)} = \|(-\Delta_{a(\theta)}-z)^{-1}\|_{L^2(\Gamma_{a(\theta)})\to L^2(\Gamma_{a(\theta)})},\ \ z\notin e^{-2ia(\theta)}[0,\infty), \]
and thus \eqref{eqn:normRtheta} is a direct consequence of \cite[Theorem 4.35]{res}.
\end{proof}

Then we introduce some preliminary properties of the spectrum of $H(\theta)$:
\begin{prop}
\label{prop:spectrum}
There exists $R>0$ such that for any $\theta\in D_{\beta_0}$, we have
\[\sigma(H(\theta)) \cap i(R,\infty) = \emptyset.\]
As for the essential spectrum $\sigma_{\textrm{ess}}(H(\theta))$, we have more precisely,
\[\sigma_{\textrm{ess}}(H(\theta))=e^{-2ia(\theta)}[0,\infty).\]
\end{prop}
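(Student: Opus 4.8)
The plan is to regard $H(\theta)=p_\theta^2+W_\theta$, with $W_\theta:=V\circ\phi_\theta$, as a bounded and relatively compact perturbation of $p_\theta^2$, and to combine the stability of the essential spectrum under such perturbations with a Neumann series argument for the statement about $i(R,\infty)$. The first point to settle is that $W_\theta$ is a bounded multiplication operator on $L^2(\RR^n)$ whose $L^\infty$ norm is bounded uniformly in $\theta\in D_{\beta_0}$ and which satisfies $W_\theta(x)\to0$ as $|x|\to\infty$. This is exactly where Proposition \ref{rem:Imagephitheta} enters: $\phi_\theta(x)=x$ for $|x|<2R$, while for $|x|\ge2R$ we have $\phi_\theta(x)\in\C_{\beta_0}^R$ (indeed $|\Re\phi_\theta(x)|\ge(1-\tan\beta_0)|x|$ there), the region on which $V$ extends analytically and, by \eqref{eqn:decayV}, decays. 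Hence $V(\phi_\theta(x))$ is well defined, $\|W_\theta\|_{L^\infty(\RR^n)}\le M$ with $M<\infty$ built from $\|V\|_{L^\infty(B_{\RR^n}(0,2R))}$ and $\sup_{\C_{\beta_0}^R}|V|$ and independent of $\theta$, and $W_\theta(x)\to0$ as $|x|\to\infty$ since $|\phi_\theta(x)|\to\infty$ inside $\C_{\beta_0}^R$. In particular $H(\theta)$ is a closed operator on $D(p_\theta^2)=H^2(\RR^n)$ by Proposition \ref{prop:ptheta}.

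Next I would check that $W_\theta$ is $p_\theta^2$-compact. Fix $z_0:=i$, which by Proposition \ref{prop:ptheta} lies outside $\sigma(p_\theta^2)=e^{-2ia(\theta)}[0,\infty)$, so $R_\theta(z_0)$ maps $L^2(\RR^n)$ boundedly into $H^2(\RR^n)$. For $\chi_N\in\CIc(\RR^n)$ with $\chi_N\equiv1$ on $B(0,N)$, write $W_\theta R_\theta(z_0)=\chi_N W_\theta R_\theta(z_0)+(1-\chi_N)W_\theta R_\theta(z_0)$: the first summand is compact because multiplication by the compactly supported bounded function $\chi_N W_\theta$ maps $H^2(\RR^n)\to L^2(\RR^n)$ compactly (Rellich), while the second has norm at most $\|(1-\chi_N)W_\theta\|_{L^\infty}\,\|R_\theta(z_0)\|_{L^2\to L^2}$, which tends to $0$ as $N\to\infty$ by the decay of $W_\theta$. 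Hence $W_\theta R_\theta(z_0)$ is compact, i.e.\ $W_\theta$ is $p_\theta^2$-compact. By the stability of the essential (Fredholm) spectrum under relatively compact perturbations (Weyl's theorem; see \cite{hislop2012}) together with Proposition \ref{prop:ptheta}, $\sigma_{\mathrm{ess}}(H(\theta))=\sigma_{\mathrm{ess}}(p_\theta^2)=e^{-2ia(\theta)}[0,\infty)$; applying in addition the analytic Fredholm theorem to the holomorphic family $z\mapsto H(\theta)-z$ of Fredholm operators on the connected set $\CC\setminus e^{-2ia(\theta)}[0,\infty)$, invertible somewhere by the next step, also gives that $\sigma(H(\theta))$ off this ray is discrete.

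For the first claim I would use the factorization $H(\theta)-z=(p_\theta^2-z)\bigl(I+R_\theta(z)W_\theta\bigr)$ on $H^2(\RR^n)$. An elementary computation from \eqref{eqn:Dbeta0} gives $|a(\theta)|=|\arg(1+\theta)|<\beta_0\le\pi/8$ for $\theta\in D_{\beta_0}$, so the ray $e^{-2ia(\theta)}[0,\infty)$ has argument in $(-\pi/4,\pi/4)$ and, for a fixed small $\delta>0$, the angle $\arg(i\lambda)=\pi/2$ lies in the sector $(-2a(\theta)+\delta,\,2\pi-2a(\theta)-\delta)$ uniformly in $\theta\in D_{\beta_0}$; Proposition \ref{prop:normRtheta} then yields $\|R_\theta(i\lambda)\|_{L^2\to L^2}\le C_\delta/\lambda$ with $C_\delta$ independent of $\theta$. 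Hence $\|R_\theta(i\lambda)W_\theta\|\le C_\delta M/\lambda<1$ once $\lambda>C_\delta M$, so $I+R_\theta(i\lambda)W_\theta$ is invertible by a Neumann series; since $i\lambda\notin\sigma(p_\theta^2)$ as well, the factorization shows $i\lambda\notin\sigma(H(\theta))$. Taking $R:=C_\delta M$ gives $\sigma(H(\theta))\cap i(R,\infty)=\emptyset$ for every $\theta\in D_{\beta_0}$.

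The main obstacle is the bookkeeping behind the relative compactness: one must invoke Proposition \ref{rem:Imagephitheta} to be sure that $\phi_\theta(x)$ never leaves $\RR^n\cup\C_{\beta_0}^R$, so that $W_\theta=V\circ\phi_\theta$ really is a bounded multiplier that decays at infinity with all constants uniform in $\theta$, and then extract its $p_\theta^2$-compactness from Rellich's theorem plus this decay. Once this is in place the application of Weyl's theorem and the Neumann series estimate are routine, the uniformity in $\theta$ being furnished by Proposition \ref{prop:normRtheta} and the bound $|a(\theta)|<\beta_0$.
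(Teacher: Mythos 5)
Your proposal is correct and follows essentially the same route as the paper: the Neumann series argument via the resolvent bound of Proposition~\ref{prop:normRtheta} (with $|a(\theta)|<\beta_0$ giving the uniformity) for the first claim, and Rellich compactness plus decay of $V\circ\phi_\theta$ for relative compactness together with Weyl's theorem for the essential spectrum. The only cosmetic differences are that you factor $H(\theta)-z=(p_\theta^2-z)(I+R_\theta(z)W_\theta)$ rather than right-multiplying by $R_\theta(z)$ as the paper does, and that you spell out the $\chi_N$-truncation step which the paper leaves as ``easy to see.''
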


\noindent
{\bf Remark:} In fact, $\sigma(H(\theta))\cap\{z:0<\arg z<2\pi-2a(\theta)\}$ is discrete and lies in $(-\infty,0)$, which is a consequence of the following Lemma \ref{lem:resonance}.
\begin{proof}
For $\theta\in D_{\beta_0}$, we have
\begin{equation}
\label{eqn:Hresolvent}
( H(\theta) - z )R_\theta(z) = I + V(\phi_\theta(x))R_\theta(z).
\end{equation}
Now assume $z\in i(R,\infty)$, note that $\theta\in D_{\beta_0} \implies -\beta_0 < a(\theta) < \beta_0$, we have
\[-2a(\theta) + \pi/4 < \arg z < 2\pi - 2a(\theta) - \pi/4,\ \ \textrm{for all }\theta\in D_{\beta_0}.\]
Using \eqref{eqn:normRtheta}, we see that $R_\theta(z) = \mathcal{O}(1/|z|):L^2(\RR^n)\to L^2(\RR^n)$ for all $z\in i(R,\infty)$ and $\theta\in D_{\beta_0}$. Recalling $\phi_\theta(\RR^n)\subset \RR^n \cup C_{\beta_0}^R$ and $V\in L^\infty(\RR^n \cup C_{\beta_0}^R)$,
we conclude that
\[ \sup_{z\in i(R,\infty)} \| V(\phi_\theta(x)) R_\theta(z) \|_{L^2(\RR^n)\to L^2(\RR^n)} = \mathcal{O}(R^{-1}),\quad\textrm{uniformly for }\theta\in D_{\beta_0}. \]
Then for $R\gg 1$, $I + V(\phi_\theta(x))R_\theta(z)$ is invertible using the Neumann series:
\[ ( I + V(\phi_\theta(x))R_\theta(z) )^{-1} = \sum_{j=0}^\infty ( V(\phi_\theta(x))R_\theta(z) )^j.
\]
Hence $H(\theta)-z$ is invertible by (\ref{eqn:Hresolvent}), for all $z\in i(R,\infty)$.

For the essential spectrum $\sigma_{\textrm{ess}}(H(\theta))$, note that $\sigma_{\textrm{ess}}(p_\theta^2) = e^{-2ia(\theta)}[0,\infty)$ in Proposition \ref{prop:ptheta}, by the invariance under compact perturbations, it suffices to show that $V(\phi_\theta(x))$ is $p_\theta^2$-compact, i.e. $V(\phi_\theta(x)):D(p_\theta^2) = H^2(\RR^n) \to L^2(\RR^n)$ is compact. Since $H^2(B_{\RR^n}(0,R))\Subset L^2(B_{\RR^n}(0,R)),\ \forall\,R>0$, and $V\circ\phi_\theta \in L^\infty(\RR^n),\ \ V(\phi_\theta(x))\to 0,\ x\to\infty$ by \eqref{eqn:decayV}, it is easy to see the compactness of $V(\phi_\theta(x)):\ H^2(\RR^n)\to L^2(\RR^n)$.
\end{proof}

Now we state the main result in this section, in which we identify the resonances defined in Definition \ref{defi:resonances} as the eigenvalues of certain spectral deformed operators $H(\theta)$.
\begin{lem}
\label{lem:resonance}
Let $H = -\Delta + V$ be a self-adjoint Schr\"odinger operator with a real-valued potential $V$ satisfying our assumptions as in \S \ref{introduction}. Then for any $\theta\in D_{\beta_0} \cap \CC^+$, we have:
\begin{itemize}
    \item For $f,g\in\mathcal{A}$, the function
        \begin{equation}
        \label{eqn:F_fg}
        F_{f,g}(z)\equiv\langle f,R_H(z)g\rangle,
        \end{equation}
        defined for $\Im z > 0$, has a meromorphic continuation across $[0,\infty)$ into $S_\theta^- \equiv \CC\setminus e^{-2ia(\theta)}[0,\infty)$.
    \item The poles of the meromorphic continuations of all matrix elements $F_{f,g}(z)$ into $S_\theta^-$ are eigenvalues of the operator $H(\theta)$.
\end{itemize}
\end{lem}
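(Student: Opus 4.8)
The plan is to produce the meromorphic continuation explicitly, by analytically continuing the matrix elements in the spectral--deformation parameter $\theta$, the whole argument resting on the fact (Proposition~\ref{prop:spectrum}) that the ray $i(R,\infty)$ avoids $\sigma(H(\theta))$ for \emph{every} $\theta\in D_{\beta_0}$ at once. To set up the candidate for the continuation: for real $\theta\in D_{\beta_0}$ the operator $U_\theta$ is unitary and $H(\theta)=U_\theta H U_\theta^{-1}$ by \eqref{eqn:Htheta}, so $R_{H(\theta)}(z)=U_\theta R_H(z)U_\theta^{-1}$. Writing $(u,v):=\int_{\RR^n}u\,v$ for the bilinear pairing (absolutely convergent for the vectors below, since elements of $\mathcal A$ and their images under $U_\theta$ decay rapidly in a cone $\C_\epsilon$), the change of variables carried out in \eqref{eqn:Uthetaisometry} gives $(U_\theta u,v)=(u,U_\theta^{-1}v)$, whence for $\Im z>0$
\[
F_{f,g}(z)=(\overline f,R_H(z)g)=\bigl(U_\theta\overline f,\;R_{H(\theta)}(z)\,U_\theta g\bigr),\qquad\theta\in D_{\beta_0}\cap\RR,
\]
where $\overline f$ is the complex conjugate of $f$, which again belongs to $\mathcal A$ (its holomorphic extension is $z\mapsto\overline{f(\overline z)}$). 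I would then \emph{define}, for every $\theta\in D_{\beta_0}$,
\[
G_{f,g}(\theta,z):=\bigl(U_\theta\overline f,\;R_{H(\theta)}(z)\,U_\theta g\bigr),
\]
using the continuation of $U_\theta$ to complex $\theta$ from \eqref{eqn:Utheta} together with Proposition~\ref{prop:specdeform} (which sends $\mathcal A$ into $L^2(\RR^n)$).

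The next step is to record two analyticity properties of $G_{f,g}$. For fixed $\theta\in D_{\beta_0}$, rewrite \eqref{eqn:Hresolvent} as $H(\theta)-z=\bigl(I+V(\phi_\theta(x))R_\theta(z)\bigr)(p_\theta^2-z)$ and apply the analytic Fredholm theorem: $R_\theta(z)$ is holomorphic on $S_\theta^-=\CC\setminus e^{-2ia(\theta)}[0,\infty)$ by Proposition~\ref{prop:ptheta}, $V(\phi_\theta(x))R_\theta(z)$ is compact on $L^2(\RR^n)$ (the proof of Proposition~\ref{prop:spectrum} shows $V(\phi_\theta(x)):H^2(\RR^n)\to L^2(\RR^n)$ is compact), and $I+V(\phi_\theta(x))R_\theta(z)$ is invertible for $z\in i(R,\infty)$ by that same proof; hence $R_{H(\theta)}(z)$, and therefore $z\mapsto G_{f,g}(\theta,z)$, is meromorphic on $S_\theta^-$ with poles contained in $\sigma_{\textrm{disc}}(H(\theta))$ and finite--rank residues. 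For fixed $z\notin\sigma(H(\theta))$, the map $\theta\mapsto G_{f,g}(\theta,z)$ is holomorphic on $D_{\beta_0}$: $\theta\mapsto U_\theta g$ is $L^2$--analytic by Proposition~\ref{prop:specdeform}; since $\overline{\phi_\theta(x)}=\phi_{\overline\theta}(x)$ one has $U_\theta\overline f=J_\theta(x)^{1/2}\,\overline{f(\phi_{\overline\theta}(x))}$, which is \emph{also} holomorphic in $\theta$; the pairing $(\cdot,\cdot)$ is bilinear; and $H(\theta)=p_\theta^2+V(\phi_\theta(x))$ is an analytic family with constant domain $H^2(\RR^n)$ (Proposition~\ref{prop:ptheta}, together with $\phi_\theta$ being affine in $\theta$ and $\phi_\theta(\RR^n)\subset\RR^n\cup\C_{\beta_0}^R$ by Proposition~\ref{rem:Imagephitheta}, which makes $V(\phi_\theta(x))$ a bounded multiplier depending analytically on $\theta$), so $R_{H(\theta)}(z)$ is holomorphic in $\theta$ off $\sigma(H(\theta))$.

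To conclude, fix $z\in i(R,\infty)$ with $R$ as in Proposition~\ref{prop:spectrum}; then $z\notin\sigma(H(\theta))$ for all $\theta\in D_{\beta_0}$, so $\theta\mapsto G_{f,g}(\theta,z)$ is holomorphic on the connected set $D_{\beta_0}$ and, by the first display, equals the constant $F_{f,g}(z)$ on $D_{\beta_0}\cap\RR$; the identity theorem gives $G_{f,g}(\theta,z)=F_{f,g}(z)$ for all $\theta\in D_{\beta_0}$ and all $z\in i(R,\infty)$. Now fix $\theta\in D_{\beta_0}\cap\CC^+$; then $a(\theta)=\arg(1+\theta)>0$, so $e^{-2ia(\theta)}[0,\infty)\subset\{\Im z\le0\}$ and $\{\Im z>0\}\subset S_\theta^-$. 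The open set $\{\Im z>0\}\setminus\sigma_{\textrm{disc}}(H(\theta))$ is connected (a half--plane with a discrete set removed) and contains $i(R,\infty)$; there $G_{f,g}(\theta,\cdot)$ and $F_{f,g}$ are both holomorphic and agree on $i(R,\infty)$, hence on the whole set, and the remaining points of $\sigma_{\textrm{disc}}(H(\theta))\cap\{\Im z>0\}$ are then removable singularities of $G_{f,g}(\theta,\cdot)$, since near them it equals the holomorphic $F_{f,g}$. Thus $G_{f,g}(\theta,\cdot)$, meromorphic on $S_\theta^-$, is the meromorphic continuation of $F_{f,g}$ across $[0,\infty)$ into $S_\theta^-$, which is the first assertion. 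For the second, a pole $z_0\in S_\theta^-$ of some $G_{f,g}(\theta,\cdot)$ must be a pole of $z\mapsto R_{H(\theta)}(z)$, so $z_0\in\sigma(H(\theta))$; since $z_0\notin\sigma_{\textrm{ess}}(H(\theta))=e^{-2ia(\theta)}[0,\infty)$, it is an eigenvalue of $H(\theta)$. (Conversely, every eigenvalue of $H(\theta)$ in $S_\theta^-$ is such a pole for suitable $f,g\in\mathcal A$, because its residue projection is nonzero and finite rank while $U_\theta\mathcal A$ is dense in $L^2(\RR^n)$ by Proposition~\ref{prop:specdeform}.)

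The step I expect to demand the most care is securing \emph{analytic}, as opposed to anti-analytic, dependence of $G_{f,g}(\theta,z)$ on $\theta$; this is exactly why one passes to the bilinear pairing and uses $\overline f\in\mathcal A$ together with $\overline{\phi_\theta(x)}=\phi_{\overline\theta}(x)$, so that both entries of $(U_\theta\overline f,R_{H(\theta)}(z)U_\theta g)$ are holomorphic in $\theta$. The other thing not to be glossed over is that agreement of $G_{f,g}(\theta,\cdot)$ and $F_{f,g}$ at a single $z$ does not suffice to conclude they coincide; one needs the $\theta$--uniform matching locus $i(R,\infty)$, which is precisely what Proposition~\ref{prop:spectrum} supplies. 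Granting these, the remainder is routine analytic--Fredholm bookkeeping.
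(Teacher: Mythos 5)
Your argument is correct and is essentially the paper's own: the paper defines $F_{f,g}(z;\theta)=\langle U_{\bar\theta}f,R_{H(\theta)}(z)U_\theta g\rangle$ to secure analyticity in $\theta$, which is algebraically identical to your bilinear pairing $(U_\theta\overline f,R_{H(\theta)}(z)U_\theta g)$, and it likewise uses Proposition~\ref{prop:spectrum} to supply the $\theta$-uniform matching locus $i(R,\infty)$, the identity principle in $\theta$ and then in $z$, and the analytic Fredholm theorem to get meromorphy in $S_\theta^-$ with poles in $\sigma_{\mathrm{disc}}(H(\theta))$. The only differences are expository (you spell out the removable-singularity step and the connectedness of the half-plane minus a discrete set, which the paper compresses into ``identity principle for meromorphic functions'').
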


\begin{proof}
With $F_{f,g}(z)$ defined in (\ref{eqn:F_fg}), the assumption on $V$ implies that $F_{f,g}$ is analytic on $\CC^+ \equiv \{z\in\CC\,:\,\Im z > 0\}$. Fix $z\in\CC^+$. For $\theta\in D_{\beta_0} \cap \RR$, $U_\theta$ is unitary and thus we can write
\begin{equation}
\label{eqn:unitaryFfg}
F_{f,g}(z) = \langle U_\theta f , (U_\theta R_H(z) U_\theta^{-1})U_\theta g \rangle = \langle U_\theta f, R_{H(\theta)}(z) U_\theta g \rangle.
\end{equation}
Proposition \ref{prop:spectrum} implies that $\theta\in D_{\beta_0} \to R_{H(\theta)}(z)$ is an analytic map provided $z\in i(R,\infty)$. Since we can write $U_{\bar{\theta}}f$ instead of $U_\theta f$ in (\ref{eqn:unitaryFfg}), we have
\begin{equation}
\label{eqn:meroextensionFfg}
\theta\in D_{\beta_0}\to F_{f,g}(z;\theta)\equiv \langle U_{\bar{\theta}}f, R_{H(\theta)}(z)U_\theta g \rangle
\end{equation}
is an analytic map provided $z\in i(R,\infty)$. Hence for any $z\in i(R,\infty)$, we have
\[F_{f,g}(z;\theta)=F_{f,g}(z),\quad \forall\,\theta\in D_{\beta_0}, \]
since this is true for all $\theta\in D_{\beta_0}\cap \RR$. Now fix any $\theta\in D_{\beta_0}\cap \CC^+$, Proposition \ref{prop:spectrum} guarantees that $F_{f,g}(z;\theta)$ can be meromorphically continued from $i(R,\infty)$ to $S_\theta^-$ since $\sigma_{\textrm{ess}}(H(\theta)) \cap S_\theta^- = \emptyset$. We have shown that $F_{f,g}(z;\theta) = F_{f,g}(z),\ z\in i(R,\infty)$, then by the identity principle for meromorphic functions, we conclude that $F_{f,g}(z;\theta)$ is a meromorphic continuation of $F_{f,g}(z)$ from $\CC^+$ to $S_\theta^-$.

Recalling that $F_{f,g}(z;\theta)=\langle U_{\bar{\theta}}f, (H(\theta)-z)^{-1} U_{\theta}g \rangle$ and that $U_{\theta}\mathcal{A}$, $U_{\bar{\theta}}\mathcal{A}$ are both dense in $L^2(\RR^n)$, thus if $H(\theta)$ has an eigenvalue at $\lambda_\theta \in S_\theta^-$, there must exist $f,g\in\mathcal{A}$ such that $\lambda_\theta$ is a pole of $F_{f,g}(z;\theta)$. Conversely, if $F_{f,g}(z;\theta)$ has a pole $\lambda_\theta\in S_\theta^-$, then it must be an eigenvalue of $H(\theta)$.
\end{proof}

\noindent
{\bf Remark:} For nonzero resonance $\lambda$ of $H$, we can define its multiplicity as the (algebraic) multiplicity of $\lambda$ as an eigenvalue of some $H(\theta)$. More precisely, let $\lambda\in \{z : -2\beta_0 < \arg z < 3\pi/2 + 2\beta_0\}$ be a resonance of $H$, there exists $\theta\in D_{\beta_0}\cap \CC^+$ such that $-2a(\theta) < \arg \lambda$. Lemma \ref{lem:resonance} implies that $\lambda$ is also an eigenvalue of $H(\theta)$, then we define the multiplicity of resonance $\lambda$ as follows:
\begin{equation}
\label{eqn:resmult}
m(\lambda) := m_\theta(\lambda) \equiv -\frac{1}{2\pi i}\tr\oint_\lambda (H(\theta) - z)^{-1} dz,
\end{equation}
where the integral is over a positively oriented circle enclosing $\lambda$ and containing no eigenvalues of $H(\theta)$ other than $\lambda$. To see that the multiplicity $m(\lambda)$ is well-defined, we need to show that $m(\lambda)$ does not depend on the choice of $\theta$. Assume $\theta_0,\theta_1 \in D_{\beta_0}$ satisfy $-2a(\theta_0)\leq -2a(\theta_1) < \arg \lambda$, let $\theta_t = (1-t)\theta_0 + t\theta_1$ then $-2a(\theta_t)<\arg \lambda$ for all $t\in [0,1]$. Let $C_\lambda$ be a positively oriented circle enclosing $\lambda$ with sufficiently small radius such that $C_\lambda\subset \{z: \arg z > -2a(\theta_1)\}$ and contains no resonances of $H$ other than $\lambda$. Therefore, $C_\lambda$ contains no eigenvalues of $H(\theta_t)$ other than $\lambda$ for all $t\in [0,1]$ as a consequence of Lemma \ref{lem:resonance}. Now we have
\[m_{\theta_t}(\lambda) = -\frac{1}{2\pi i}\tr\int_{C_\lambda} (H(\theta_t)-z)^{-1}dz,\quad t\in[0,1].\]
Hence $m_{\theta_t}(\lambda)$ depends continuously on $t$ which implies that $m_{\theta_t}(\lambda)$ must be a constant as it is integer-valued. In particular, we have $m_{\theta_0}(\lambda) = m_{\theta_1}(\lambda)$, thus $m(\lambda)$ is well-defined.

\section{Eigenvalues and complex scaling}
\label{Eigenvalues}

In this section we will show that the eigenvalues of $H_\epsilon \equiv -\Delta - i\epsilon x^2 + V$ are invariant under complex scaling, in other words, these eigenvalues are the same as the eigenvalues of
\[
H_\epsilon(\theta) := U_\theta H_\epsilon U_\theta^{-1} = p_\theta^2 - i\epsilon \phi_\theta(x)^2 + V(\phi_\theta(x)),\quad \theta\in D_{\beta_0}\cap \CC^+.
\]

First we recall some basic properties about the Davies harmonic oscillator and its deformation, see \cite[\S 3]{Zw-vis} for details. The operator $ H_{\epsilon, \gamma} := - \Delta + e^{ - i \gamma } \epsilon x^2 $,
$ \epsilon > 0 $, $0 \leq \gamma < \pi $, was used by Davies \cite{Dav} to illustrate properties of
non-normal differential operators. We are more interested in the deformations of $H_{\epsilon,\gamma}$ under complex scaling. Let
\[ Q_{\epsilon,\theta}= -\Delta_\theta -i\epsilon x_\theta^2,\ \ \textrm{where }x_\theta=z|_{\Gamma_\theta} \]
be a deformed operator on $\Gamma_\theta$ as in \cite[\S 3]{Zw-vis}. In view of \eqref{eqn:ptheta2}, we have
\begin{equation}
\label{eqn:Qepstheta}
p_\theta^2 - i\epsilon \phi_\theta(x)^2 = U_\theta Q_{\epsilon,a(\theta)} U_\theta^{-1},\ \ \theta\in D_{\beta_0}.
\end{equation}
Hence we can study the spectrum and the resolvents of $p_\theta^2 - i\epsilon \phi_\theta(x)^2$ using the relevant results about $Q_{\epsilon,a(\theta)}$. We recall \cite[Lemma 4.]{Zw-vis} that $\sigma(Q_{\epsilon,a(\theta)})=\sqrt{\epsilon} e^{-i\pi/4}(n + 2|\NN_0^n|)$ for $\theta\in D_{\beta_0}$, then by \eqref{eqn:Qepstheta} we have
\begin{prop}
\label{prop:QepsSpectrum}
For $\theta\in D_{\beta_0}$, $\epsilon>0$, the spectrum of $p_\theta^2 - i\epsilon \phi_\theta(x)^2$ is independent of $\theta$ and given by $\sqrt{\epsilon} e^{-i\pi/4}(n + 2|\NN_0^n|)$.
\end{prop}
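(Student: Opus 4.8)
The plan is to extract the spectrum directly from the unitary conjugation \eqref{eqn:Qepstheta}, so that everything reduces to the known spectral computation for the Davies oscillator in \cite{Zw-vis}. First I would recall from Proposition \ref{rem:Imagephitheta} that for $\theta\in D_{\beta_0}$ the operator $U_\theta$ is an isometric isomorphism of $L^2(\Gamma_{a(\theta)})$ onto $L^2(\RR^n)$, with inverse \eqref{eqn:UthetaInverse}. By \eqref{eqn:Qepstheta}, the operator $p_\theta^2 - i\epsilon\phi_\theta(x)^2$ on $L^2(\RR^n)$ --- taken with the domain $U_\theta D(Q_{\epsilon,a(\theta)})$ --- is the conjugate of $Q_{\epsilon,a(\theta)}$ by this unitary. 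Conjugation by a unitary operator between Hilbert spaces preserves the resolvent set, the spectrum, and eigenvalues together with their algebraic multiplicities, so
\[ \sigma\bigl(p_\theta^2 - i\epsilon\phi_\theta(x)^2\bigr) = \sigma\bigl(Q_{\epsilon,a(\theta)}\bigr), \qquad \theta\in D_{\beta_0}. \]

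Next I would invoke \cite[Lemma 4.]{Zw-vis}, which gives $\sigma(Q_{\epsilon,a(\theta)}) = \sqrt{\epsilon}\,e^{-i\pi/4}\bigl(n + 2|\NN_0^n|\bigr)$ for every $\theta\in D_{\beta_0}$ --- this is precisely the statement that complex scaling leaves the spectrum of the deformed Davies oscillator unchanged. Combining this with the previous display yields the claimed formula, and independence of $\theta$ is then immediate since the right-hand side involves neither $\theta$ nor $a(\theta)$.

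The only step requiring a little care is the domain bookkeeping behind \eqref{eqn:Qepstheta}: one must check that it is an identity of closed unbounded operators, i.e. that $U_\theta$ carries the domain of $Q_{\epsilon,a(\theta)}$ described in \cite[\S 3]{Zw-vis} exactly onto the natural (maximal) domain of $p_\theta^2 - i\epsilon\phi_\theta(x)^2$ --- the quadratic potential shrinks the domain $H^2(\RR^n)$ of $p_\theta^2$, so the two terms do not act independently. This is, however, already implicit in the derivation of \eqref{eqn:ptheta2}--\eqref{eqn:Qepstheta} from the corresponding facts in \cite[\S 3]{Zw-vis}, so it presents no genuine obstacle; the substance of the proposition is entirely contained in \cite[Lemma 4.]{Zw-vis}.
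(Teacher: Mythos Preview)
Your proposal is correct and follows essentially the same route as the paper: the paper's argument is precisely the two-line deduction ``by \cite[Lemma 4.]{Zw-vis} we have $\sigma(Q_{\epsilon,a(\theta)})=\sqrt{\epsilon}\,e^{-i\pi/4}(n+2|\NN_0^n|)$, then by \eqref{eqn:Qepstheta} the proposition follows,'' and your write-up spells out the unitary-equivalence step and the domain bookkeeping that this line leaves implicit.
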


\noindent
For the resolvents of $p_\theta^2 - i\epsilon \phi_\theta(x)^2$:
\begin{equation}
\label{eqn:Repstheta}
R_{\epsilon, \theta}(z) := ( p_\theta^2 - i\epsilon \phi_\theta(x)^2 - z )^{-1},\ \ \theta\in D_{\beta_0},
\end{equation}
we recall \cite[Lemma 5.]{Zw-vis} that for $\delta>0$, $-\pi/8<\theta<\pi/8$, we have
\[ (Q_{\epsilon,\theta}-z)^{-1}=\mathcal{O}_\delta(1/|z|):L^2(\Gamma_\theta)\to L^2(\Gamma_\theta),\ \ -2\theta+\delta < \arg z < 3\pi/2 + 2\theta - \delta, \]
uniformly for $0<\epsilon<\epsilon_0$, where $\epsilon_0 > 0$ is a constant. Using \eqref{eqn:Qepstheta}, we have
\begin{prop}
\label{prop:normRepstheta}
Let $\theta\in D_{\beta_0}$, $\delta>0$, then uniformly for $0<\epsilon<\epsilon_0$, we have
\begin{equation}
\label{eqn:normRepstheta}
R_{\epsilon,\theta}(z)=\mathcal{O}_\delta(1/|z|):L^2(\RR^n)\to L^2(\RR^n),\ \ -2a(\theta)+\delta < \arg z < 3\pi/2 + 2a(\theta) - \delta.
\end{equation}
\end{prop}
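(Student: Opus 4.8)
The proof of Proposition~\ref{prop:normRepstheta} is essentially a transport of the estimate for $(Q_{\epsilon,\theta}-z)^{-1}$ on $L^2(\Gamma_\theta)$ to an estimate for $R_{\epsilon,\theta}(z)$ on $L^2(\RR^n)$ via the unitary identification. The plan is to combine the conjugation formula \eqref{eqn:Qepstheta} with the isometry property of $U_\theta$ established in Proposition~\ref{rem:Imagephitheta}, exactly as was done for $R_\theta(z)$ in the proof of Proposition~\ref{prop:normRtheta}.

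First I would recall that, by \eqref{eqn:Qepstheta}, we have $p_\theta^2 - i\epsilon\phi_\theta(x)^2 = U_\theta Q_{\epsilon,a(\theta)} U_\theta^{-1}$ for $\theta\in D_{\beta_0}$, where $Q_{\epsilon,a(\theta)}$ acts on $L^2(\Gamma_{a(\theta)})$. Taking resolvents, for $z\notin\sigma(Q_{\epsilon,a(\theta)})=\sqrt{\epsilon}\,e^{-i\pi/4}(n+2|\NN_0^n|)$ we get
\[
R_{\epsilon,\theta}(z) = U_\theta (Q_{\epsilon,a(\theta)}-z)^{-1} U_\theta^{-1}.
\]
Since Proposition~\ref{rem:Imagephitheta} gives that $U_\theta : L^2(\Gamma_{a(\theta)})\to L^2(\RR^n)$ and $U_\theta^{-1}: L^2(\RR^n)\to L^2(\Gamma_{a(\theta)})$ are isometries, the operator norms match:
\[
\|R_{\epsilon,\theta}(z)\|_{L^2(\RR^n)\to L^2(\RR^n)} = \|(Q_{\epsilon,a(\theta)}-z)^{-1}\|_{L^2(\Gamma_{a(\theta)})\to L^2(\Gamma_{a(\theta)})}.
\]

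Next I would invoke \cite[Lemma 5.]{Zw-vis} with the scaling angle taken to be $a(\theta)$ rather than $\theta$. Since $\theta\in D_{\beta_0}$ forces $-\beta_0 < a(\theta) < \beta_0$ with $\beta_0\leq\pi/8$, the hypothesis $-\pi/8 < a(\theta) < \pi/8$ of that lemma is satisfied, and it yields, for any $\delta>0$,
\[
(Q_{\epsilon,a(\theta)}-z)^{-1} = \mathcal{O}_\delta(1/|z|): L^2(\Gamma_{a(\theta)})\to L^2(\Gamma_{a(\theta)}),\quad -2a(\theta)+\delta < \arg z < 3\pi/2 + 2a(\theta)-\delta,
\]
uniformly for $0<\epsilon<\epsilon_0$. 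Combining this with the norm identity above gives \eqref{eqn:normRepstheta}, with the same $\delta$ and the same uniformity in $\epsilon$. One should note that the only slightly delicate point is the bookkeeping: the $\Gamma_\theta$ appearing in \cite{Zw-vis} is our $\Gamma_{a(\theta)}$ (the scaling contour of angle $a(\theta)=\arg(1+\theta)$), so the sector of validity is governed by $a(\theta)$, not $\theta$ itself — but this is purely notational and presents no real obstacle. The substantive content is entirely contained in \cite[Lemma 5.]{Zw-vis} together with the isometry from Proposition~\ref{rem:Imagephitheta}.
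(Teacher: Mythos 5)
Your argument is correct and matches the paper's intended proof, which is left implicit (the paper simply writes ``Using \eqref{eqn:Qepstheta}, we have'' before stating the proposition). You conjugate by the isometry $U_\theta$ from Proposition~\ref{rem:Imagephitheta} to transfer \cite[Lemma 5]{Zw-vis}, applied with scaling angle $a(\theta)$, from $L^2(\Gamma_{a(\theta)})$ to $L^2(\RR^n)$, precisely mirroring the argument given for Proposition~\ref{prop:normRtheta}.
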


Now we state the main result about the eigenvalues of $H_\epsilon$ :
\begin{lem}
\label{lem:eigenvalue}
For any $\theta\in D_{\beta_0}$, $0 < \epsilon < \epsilon_0$,
\[z \mapsto R_{H_\epsilon(\theta)}(z)\equiv ( H_\epsilon(\theta) - z )^{-1},\quad -\pi/4 < \arg z < 7\pi/4,
\]
is a meromorphic family of operators on $L^2(\RR^n)$ with poles of finite rank. Furthermore, the poles of $( H_\epsilon(\theta) - z )^{-1}$ do not depend on $\theta\in D_{\beta_0}$ and coincide, with agreement of multiplicities, with the poles of $( H_\epsilon - z )^{-1}$.
\end{lem}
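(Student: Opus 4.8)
The plan is to mirror the argument used for $H(\theta)$ in Lemma \ref{lem:resonance} and Proposition \ref{prop:spectrum}, replacing the free deformed operator $p_\theta^2$ by the deformed Davies oscillator $p_\theta^2 - i\epsilon\phi_\theta(x)^2$ and using Proposition \ref{prop:normRepstheta} in place of Proposition \ref{prop:normRtheta}. First I would establish the meromorphy and finite-rank property of $R_{H_\epsilon(\theta)}(z)$. Writing
\[
( H_\epsilon(\theta) - z ) R_{\epsilon,\theta}(z) = I + V(\phi_\theta(x)) R_{\epsilon,\theta}(z),
\]
I note that $V\circ\phi_\theta \in L^\infty(\RR^n)$ with $V(\phi_\theta(x))\to 0$ as $x\to\infty$ (by \eqref{eqn:decayV} and $\phi_\theta(\RR^n)\subset\RR^n\cup\C_{\beta_0}^R$), so $V(\phi_\theta(x))$ is $(p_\theta^2 - i\epsilon\phi_\theta(x)^2)$-compact by the same Rellich-type argument as in Proposition \ref{prop:spectrum} (the extra term $-i\epsilon\phi_\theta(x)^2$ only strengthens the relative compactness, since the domain of $Q_{\epsilon,a(\theta)}$ embeds compactly into $L^2$ on bounded sets and the potential still decays). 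Then $V(\phi_\theta(x)) R_{\epsilon,\theta}(z)$ is compact and, by Proposition \ref{prop:normRepstheta}, has norm $\mathcal{O}_\delta(R_0^{-1})$ for $z$ in $i(R_0,\infty)$ with $R_0$ large; hence $I + V(\phi_\theta(x)) R_{\epsilon,\theta}(z)$ is invertible there, and the analytic Fredholm theorem gives that $(I + V(\phi_\theta(x)) R_{\epsilon,\theta}(z))^{-1}$, and therefore $R_{H_\epsilon(\theta)}(z) = R_{\epsilon,\theta}(z)(I + V(\phi_\theta(x)) R_{\epsilon,\theta}(z))^{-1}$, is a meromorphic family on the sector $-2a(\theta)+\delta < \arg z < 3\pi/2 + 2a(\theta) - \delta$ with poles of finite rank; since $\delta$ is arbitrary and $\theta\in D_{\beta_0}$ may be chosen with $a(\theta)$ close to $\pm\beta_0$, one covers $-\pi/4 < \arg z < 7\pi/4$ as claimed (using $\beta_0\le\pi/8$).

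Next I would prove the $\theta$-independence of the poles together with agreement of multiplicities. As in the remark following Lemma \ref{lem:resonance}, the key point is that $\theta \mapsto R_{H_\epsilon(\theta)}(z)$ is analytic on $D_{\beta_0}$ for each fixed $z$ in a region where it is defined: indeed $\theta\mapsto R_{\epsilon,\theta}(z)$ is analytic (from \eqref{eqn:Qepstheta}, $U_\theta$ and $U_\theta^{-1}$ depend analytically on $\theta$ on $\mathcal A$ by Proposition \ref{prop:specdeform}, and the spectrum of $Q_{\epsilon,a(\theta)}$ is $\theta$-independent by Proposition \ref{prop:QepsSpectrum}), and $\theta\mapsto V(\phi_\theta(x))$ is analytic as a bounded-operator-valued map since $V$ is analytic on the truncated cone. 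Fixing a pole $\lambda$ of $R_{H_\epsilon(\theta_0)}$ and a small positively oriented circle $C_\lambda$ around it enclosing no other pole, I consider a path $\theta_t$ in $D_{\beta_0}$ from $\theta_0$ to $\theta_1$ such that for all $t$ the sector of Proposition \ref{prop:normRepstheta} for $\theta_t$ contains $C_\lambda$; then
\[
-\frac{1}{2\pi i}\tr\oint_{C_\lambda} R_{H_\epsilon(\theta_t)}(z)\, dz
\]
depends continuously on $t$ and is integer-valued, hence constant. This shows both that $\lambda$ remains a pole for every $\theta_t$ (the rank cannot jump to zero) and that the multiplicity is preserved, so the pole set with multiplicities is independent of $\theta\in D_{\beta_0}$.

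Finally I would identify these poles, at $\theta=0$, with the eigenvalues of $H_\epsilon$ itself. Since $H_\epsilon$ has discrete spectrum (as reviewed in \S\ref{Eigenvalues}: $-\Delta - i\epsilon x^2$ has compact resolvent, and $V$ is a bounded decaying perturbation), $R_{H_\epsilon}(z) = (H_\epsilon - z)^{-1}$ is a meromorphic family with poles exactly at the eigenvalues of $H_\epsilon$, with multiplicities given by the same contour-trace formula. But $H_\epsilon(0) = U_0 H_\epsilon U_0^{-1} = H_\epsilon$ since $\phi_0 = \mathrm{id}$ and $U_0 = I$, so the poles of $R_{H_\epsilon(\theta)}(z)$ coincide, with multiplicities, with the eigenvalues of $H_\epsilon$; combined with the previous paragraph this holds for every $\theta\in D_{\beta_0}$. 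I expect the main obstacle to be the careful justification that $V(\phi_\theta(x))$ is $(p_\theta^2 - i\epsilon\phi_\theta(x)^2)$-compact uniformly enough that the Neumann series and analytic Fredholm arguments go through on the full sector claimed — in particular keeping the estimates uniform in $0<\epsilon<\epsilon_0$ — though this follows the template already laid down for $H(\theta)$ and the $Q_{\epsilon,\theta}$ resolvent bounds imported from \cite{Zw-vis}.
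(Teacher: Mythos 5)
The first half of your proposal is sound and mirrors the paper's approach, with one imprecision: you derive meromorphy only in the sector $-2a(\theta)+\delta < \arg z < 3\pi/2 + 2a(\theta) - \delta$ furnished by Proposition \ref{prop:normRepstheta}, and then say this ``covers'' $-\pi/4 < \arg z < 7\pi/4$ by letting $a(\theta)\to\pm\beta_0$. That move is not available --- the Lemma is for a \emph{fixed} $\theta$, so you may not vary $\theta$ to enlarge the region. The paper instead invokes Proposition \ref{prop:QepsSpectrum} directly: the spectrum of $p_\theta^2 - i\epsilon\phi_\theta(x)^2$ is the discrete set $\sqrt{\epsilon}\,e^{-i\pi/4}(n + 2|\NN_0^n|)$ lying on the ray $\arg z = -\pi/4$, so $R_{\epsilon,\theta}(z): L^2 \to H^2$ is analytic on all of $\{-\pi/4 < \arg z < 7\pi/4\}$; the sector bound of Proposition \ref{prop:normRepstheta} is only needed to find one point at which $I + V(\phi_\theta(x))R_{\epsilon,\theta}(z)$ is invertible so that analytic Fredholm theory kicks in. With that adjustment the first part is fine.

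The second half has a genuine gap. You fix a contour $C_\lambda$ around a pole $\lambda$ of $R_{H_\epsilon(\theta_0)}$, deform $\theta_t$ from $\theta_0$ to $\theta_1$, and claim that the constancy of the integer-valued trace
\[
-\frac{1}{2\pi i}\tr\oint_{C_\lambda} R_{H_\epsilon(\theta_t)}(z)\, dz
\]
``shows that $\lambda$ remains a pole for every $\theta_t$.'' It does not: constancy of the contour trace says only that the total multiplicity of poles \emph{inside} $C_\lambda$ is preserved, and a priori those poles may move continuously as $t$ varies, so that $\lambda$ itself need not be a pole for $t>0$. Worse, if a pole crosses $C_\lambda$ at some intermediate $t$, the trace integral is not even continuous at that $t$, so the argument does not get off the ground unless you already know the poles do not move --- which is exactly what you are trying to prove. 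This circularity is precisely what the Aguilar--Balslev--Combes mechanism, used in the paper, is designed to break: one forms the matrix elements $G_{f,g}(z;\theta) = \langle U_{\bar\theta}f,\, R_{H_\epsilon(\theta)}(z)\, U_\theta g\rangle$ with $f,g\in\mathcal A$, notes that $G_{f,g}(z;\theta) = G_{f,g}(z)$ for real $\theta$ by unitarity of $U_\theta$, extends this identity to complex $\theta\in D_{\beta_0}$ first at $z=i\rho$, $\rho\gg 1$, by analyticity in $\theta$ and then to the whole region by the identity principle for meromorphic functions. Combined with density of $U_\theta\mathcal A$ and $U_{\bar\theta}\mathcal A$ in $L^2$, this identifies the pole \emph{set} of $R_{H_\epsilon(\theta)}$ with that of $R_{H_\epsilon}$ independently of $\theta$; only \emph{after} that does the contour-trace continuity argument, over a circle known to contain exactly $\lambda$ for all $\theta$, legitimately compare multiplicities. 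Your proposal omits this matrix-element step, so pole-independence is left unproved.
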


\begin{proof}
For fixed $\theta\in D_{\beta_0}$, one can compute
\[
( H_\epsilon(\theta) - z )R_{\epsilon,\theta}(z) = I + V(\phi_\theta(x))R_{\epsilon,\theta}(z),
\]
then we obtain from \eqref{eqn:normRepstheta} that
\begin{gather}
\label{eqn:HepsilonResolvent}
\begin{gathered}
R_{H_\epsilon(\theta)}(z) = R_{\epsilon,\theta}(z) ( I + V(\phi_\theta(x) )R_{\epsilon,\theta}(z) )^{-1} , \\
-2a(\theta) + \delta  < \arg z < 3\pi/2 + 2a(\theta) - \delta , \ \  |z| \gg 1 ,
\end{gathered}
\end{gather}
where for large $|z|$, $ I + V(\phi_\theta(x) )R_{\epsilon,\theta}(z) $ is invertible by a Neumann series argument. Note that $R_{\epsilon,\theta}(z): L^2(\RR^n)\to H^2(\RR^n),\ \ \arg z\neq -\pi/4$ by Proposition \ref{prop:QepsSpectrum}, recalling that $V(\phi_\theta(x) ):H^2(\RR^n)\to L^2(\RR^n)$ is compact (see the proof of Proposition \ref{prop:spectrum}), we conclude that $z\mapsto V(\phi_\theta(x))R_{\epsilon,\theta}(z)$ is an analytic family of compact operators for $-\pi/4 < z <7\pi/4$. Hence $z\mapsto ( I + V(\phi_\theta(x))R_{\epsilon,\theta}(z) )^{-1}$ is a meromorphic family of operators in the same range of $z$. In particular $z\mapsto R_{H_\epsilon(\theta)}(z),\ \ -\pi/4 < z <7\pi/4$ is a meromorphic family of operators on $L^2(\RR^n)$ with poles of finite rank.

The poles and their multiplicities are independent of $\theta$. For that we modify the proof of Lemma \ref{lem:resonance} and define matrix elements:
\[G_{f,g}(z) = \langle f, ( H_\epsilon - z )^{-1} g \rangle,\]
and
\[G_{f,g}(z;\theta) = \langle U_{\bar{\theta}}f, (H_\epsilon(\theta)-z)^{-1} U_{\theta}g \rangle,\]
for all $f,g\in\mathcal{A}$.

Note that $-2a(\theta)+\pi/4 < \pi/2 < 3\pi/2 + 2a(\theta) -\pi/4$ since $-\beta_0<a(\theta)<\beta_0,\ \theta\in D_{\beta_0}$, using \eqref{eqn:normRepstheta} and Neumann series argument, $H_{\epsilon}(\theta) - z$ is invertible at $z=i\rho,\ \rho\gg 1$ for each $\theta\in D_{\beta_0}$. Like \eqref{eqn:meroextensionFfg}, we have
\[\theta\in D_{\beta_0}\to G_{f,g}(z;\theta)\equiv \langle U_{\bar{\theta}}f, R_{H(\theta)}(z)U_\theta g \rangle \]
is an analytic map provided $z=i\rho,\ \rho\gg 1$. Hence we have
\begin{equation}
\label{eqn:localGfg=Gfgtheta}
G_{f,g}(z;\theta) = G_{f,g}(z),\quad \forall\,\theta\in D_{\beta_0},\quad z=i\rho,\ \rho\gg 1,
\end{equation}
since this is true for all $\theta\in D_{\beta_0}\cap \RR$. Now fix any $\theta\in D_{\beta_0}$, note that $G_{f,g}(z)$ and $G_{f,g}(z;\theta)$ are both meromorphic in $-\pi/4 < z < 7\pi/4$, we conclude that
\begin{equation}
\label{eqn:Gfg=Gfgtheta}
G_{f,g}(z;\theta) = G_{f,g}(z),\quad -\pi/4 < z < 7\pi/4,
\end{equation}
by \eqref{eqn:localGfg=Gfgtheta} and the identity principle of meromorphic functions.

Now argue as in the end of the proof of Lemma \ref{lem:resonance}: if $(H_\epsilon-z)^{-1}$ has a pole at $\lambda_\theta\in \CC\setminus e^{-i\pi/4}[0,\infty)$, then there must exist $f,g\in\mathcal{A}$ such that $\lambda_\theta$ is a pole of $G_{f,g}(z;\theta)$, by \eqref{eqn:Gfg=Gfgtheta}, $\lambda_\theta$ is also a pole of $G_{f,g}(z)$ thus $(H_\epsilon(\theta)-z)^{-1}$ must have a pole at $\lambda_\theta$ and vise versa. Hence for any $\theta\in D_{\beta_0}$, the poles of $(H_\epsilon(\theta)-z)^{-1}$ in $\CC\setminus e^{-i\pi/4}[0,\infty)$ coincide the poles of $(H_\epsilon-z)^{-1}$ in $\CC\setminus e^{-i\pi/4}[0,\infty)$.

To show the agreement of multiplicities, for any pole $\lambda$ of $(H_\epsilon(\theta)-z)^{-1}$, the multiplicity of $\lambda$ is defined by
\[ m_{\epsilon,\theta}(\lambda) = -\frac{1}{2\pi i} \tr \oint_\lambda (H_\epsilon(\theta)-z)^{-1} dz,\]
where the integral is over a positively oriented circle independent of $\theta$ enclosing $\lambda$ and containing no poles other than $\lambda$.
Since $m_{\epsilon,\theta}(\lambda)$ is continuous on $\theta\in D_{\beta_0}$ and integer-valued, it must be independent of $\theta\in D_{\beta_0}$. Hence we have
\[ m_{\epsilon,\theta}(\lambda) = m_{\epsilon,0}(\lambda)= -\frac{1}{2\pi i} \tr \oint_\lambda (H_\epsilon-z)^{-1} dz \]
which is the multiplicity of $\lambda$ as a pole of $(H_\epsilon-z)^{-1}$.
\end{proof}

\section{Meromorphic continuation}
\label{mc}

In this section we will introduce a new way to express the meromorphic continuations of resolvents $R_{H(\theta)}(z)$ and $R_{H_\epsilon(\theta)}(z)$ in a given region $\Omega\Subset \{ z: - 2a(\theta) < \arg z < 3 \pi /2 + 2a(\theta) \},$ which is crucial in the proof of Theorem \ref{t:1}. For that we will first review some properties about $R_\theta(z)$ and the weighted $L^2$ space, $\langle x\rangle^{-2} L^2(\RR^n)$.

\begin{lem}
\label{lem:cpct embed}
Let $\langle x\rangle^{-2} L^2(\RR^n)$ be a weighted $L^2$ space with the norm
\begin{equation}
\label{eqn:weighted norm}
\|u\|_{\langle x\rangle^{-2} L^2(\RR^n)}=\|\langle x\rangle^2 u\|_{L^2(\RR^n)}.
\end{equation}
Then $H^2(\RR^n)\cap\langle x\rangle^{-2} L^2(\RR^n)$ is compactly embedded in $L^2(\RR^n)$.
\end{lem}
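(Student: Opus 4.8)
The plan is to run the classical two-scale compactness argument: the $H^2$-control provides local compactness through Rellich's theorem, the weight $\langle x\rangle^{-2}$ prevents $L^2$-mass from escaping to infinity, and a diagonal subsequence glues the two together. Concretely, I would equip $X:=H^2(\RR^n)\cap\langle x\rangle^{-2}L^2(\RR^n)$ with the norm $\|u\|_X:=\|u\|_{H^2(\RR^n)}+\|u\|_{\langle x\rangle^{-2}L^2(\RR^n)}$, fix an arbitrary bounded sequence $(u_k)\subset X$, say $\|u_k\|_X\le M$ for all $k$, and produce a subsequence that converges in $L^2(\RR^n)$.

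First I would record the uniform tail bound: for every $R>0$, since $\langle x\rangle^{-4}\le\langle R\rangle^{-4}$ on $\{|x|>R\}$,
\[ \int_{|x|>R}|u_k(x)|^2\,dx\le\langle R\rangle^{-4}\int_{\RR^n}\langle x\rangle^4|u_k(x)|^2\,dx=\langle R\rangle^{-4}\|u_k\|_{\langle x\rangle^{-2}L^2(\RR^n)}^2\le M^2\langle R\rangle^{-4}, \]
so the $L^2$-mass of each $u_k$ outside $B_{\RR^n}(0,R)$ is at most $M^2\langle R\rangle^{-4}$, \emph{uniformly in $k$}, hence uniformly small for $R$ large. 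Next, for each $m\in\NN$ the restriction sequence $\bigl(u_k|_{B_{\RR^n}(0,m)}\bigr)$ is bounded in $H^2(B_{\RR^n}(0,m))$, and since $H^2(B_{\RR^n}(0,m))\Subset L^2(B_{\RR^n}(0,m))$ — the same Rellich embedding already invoked in Section~\ref{Resonance} — it has a subsequence converging in $L^2(B_{\RR^n}(0,m))$. Performing this successively for $m=1,2,\dots$ and extracting a diagonal subsequence yields $(u_{k_j})$ converging in $L^2(B_{\RR^n}(0,m))$ for every $m\in\NN$.

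Finally I would combine the two ingredients: given $\epsilon>0$, first choose $m$ with $M\langle m\rangle^{-2}<\epsilon/4$, so that $\|u_{k_i}-u_{k_j}\|_{L^2(\RR^n\setminus B_{\RR^n}(0,m))}<\epsilon/2$ for all $i,j$ by the tail bound; then choose $J$ so that $\|u_{k_i}-u_{k_j}\|_{L^2(B_{\RR^n}(0,m))}<\epsilon/2$ for $i,j\ge J$. Adding the interior and exterior contributions gives $\|u_{k_i}-u_{k_j}\|_{L^2(\RR^n)}<\epsilon$ for $i,j\ge J$, so $(u_{k_j})$ is Cauchy in the complete space $L^2(\RR^n)$ and converges there. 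This shows the inclusion $X\hookrightarrow L^2(\RR^n)$ is compact.

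I do not expect a genuine obstacle here; the only points requiring care are that the tail estimate in the second step is truly uniform in $k$ (which is exactly what the weight $\langle x\rangle^{-2}$ buys us) and that the diagonal extraction is carried out before passing to any limit. An alternative route would be to deduce the statement from compactness of suitable operators on $L^2(\RR^n)$ built from $\langle x\rangle^{-1}$ and $(1-\Delta)^{-1/2}$, but the elementary argument above is self-contained and relies only on Rellich's theorem.
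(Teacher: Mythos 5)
Your proof is correct and follows essentially the same approach as the paper: a uniform tail bound from the weight, Rellich compactness on balls, and a diagonal extraction to produce a Cauchy sequence in $L^2(\RR^n)$. The only cosmetic difference is that the paper fixes error thresholds $1/j$ and extracts nested subsequences whose full $L^2(\RR^n)$-differences are small, while you extract nested subsequences convergent on $B(0,m)$ and verify the Cauchy property afterward; the two bookkeeping schemes are equivalent.
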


\begin{proof}
Let $u_n\in H^2(\RR^n)\cap\langle x\rangle^{-2} L^2(\RR^n)$ with $\|u_n\|_{H^2(\RR^n)}\leq 1$ and $\|\langle x\rangle^2 u_n\|_{L^2(\RR^n)}\leq 1$. For some $r>0$ to be decided, we have
\[\int_{|x|\geq r}|u_n(x)|^2 dx \leq \langle r\rangle^{-4}\int_{|x|\geq r} \langle x\rangle^4 |u_n(x)|^2 dx \leq \langle r\rangle^{-4}\|\langle x\rangle^2 u_n\|_{L^2(\RR^n)}^2 = \langle r\rangle^{-4}.  \]
Then we choose $r$ sufficientlt large such that $\int_{|x|\geq r}|u_n(x)|^2 dx < 1/8$ for all $n$. Since $H^2(B(0,r)\Subset L^2(B(0,r)$, there exists subsequence $\{u_n^{(1)}\}\subset\{u_n\}$ satisfying
\[\int_{B(0,r)}|u_n^{(1)}(x) - u_m^{(1)}(x)|^2 dx < 1/2,\quad\textrm{for all }n,m.\]
Hence we have
\[
\begin{split}
\|u_n^{(1)}-u_m^{(1)}\|_{L^2(\RR^n)}^2 & = \int_{B(0,r)}|u_n^{(1)}(x) - u_m^{(1)}(x)|^2 dx + \int_{|x|\geq r}|u_n^{(1)}(x) - u_m^{(1)}(x)|^2 dx \\
& < 1/2 + \int_{|x|\geq r}(2|u_n^{(1)}(x)|^2 + 2|u_m^{(1)}(x)|^2) dx \\
& < 1/2 + 2/8 +2/8 = 1.
\end{split}
\]
By the same argument, we can find $\{u_n^{(1)}\}\supset\cdots\supset\{u_n^{(j)}\}\supset\cdots$ with
\[\|u_n^{(j)}-u_m^{(j)}\|_{L^2(\RR^n)} < 1/j,\quad\textrm{for all }n,m.\]
Then the subsequence $\{u_j^{(j)}\}\subset\{u_n\}$ is a Cauchy sequence in $L^2(\RR^n)$.
\end{proof}

\begin{lem}
\label{lem:Rtheta}
Fix $\theta \in D_{\beta_0}\cap \CC^+$, $R_\theta(z)$ is an analytic family of operators $\langle x\rangle^{-2}L^2\to \langle x\rangle^{-2}L^2$ for $ -2a(\theta)< \arg z < 2 \pi  - 2a(\theta) $. Furthermore, if $\Omega\Subset \{ z: - 2a(\theta) < \arg z < 2 \pi - 2a(\theta) \}$ then there exists $C=C_{\Omega,\theta}$ such that
\[
\|R_\theta(z)\|_{\langle x\rangle^{-2}L^2\to \langle x\rangle^{-2}L^2} \leq C,\quad z\in\Omega.
\]
\end{lem}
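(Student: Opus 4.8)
The plan is to commute the weight through the resolvent $R_\theta(z)=(p_\theta^2-z)^{-1}$, reducing the claimed bound to the unweighted estimates on $L^2$ and $H^2$. I begin with two preliminary observations. First, by Proposition \ref{prop:ptheta} one has $\sigma(p_\theta^2)=e^{-2ia(\theta)}[0,\infty)$, so $\{z:-2a(\theta)<\arg z<2\pi-2a(\theta)\}$ is precisely the resolvent set of $p_\theta^2$; hence for $\Omega\Subset$ this set the closure $\overline\Omega$ is compact and lies at positive distance from $\sigma(p_\theta^2)$, so $z\mapsto R_\theta(z)$ is analytic near $\overline\Omega$ with $\sup_\Omega\|R_\theta(z)\|_{L^2\to L^2}<\infty$, and (using $D(p_\theta^2)=H^2$ together with the elliptic a priori estimate $\|v\|_{H^2}\lesssim\|v\|_{L^2}+\|p_\theta^2 v\|_{L^2}$) also $\sup_\Omega\|R_\theta(z)\|_{L^2\to H^2}<\infty$. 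Second, by construction $\phi_\theta$ is a smooth diffeomorphism of $\RR^n$ equal to the identity on $\{|x|<2R\}$ and to the linear map $x\mapsto(1+\theta)x$ on $\{|x|>8R\}$, so $p_\theta^2=U_\theta p^2 U_\theta^{-1}$ is a second-order differential operator whose coefficients, together with all their derivatives, are bounded on $\RR^n$ (it equals $-\Delta$ near $0$ and $(1+\theta)^{-2}(-\Delta)$ near infinity). Since $\nabla\langle x\rangle$ and $\nabla^2\langle x\rangle$ are bounded, it follows that $[p_\theta^2,\langle x\rangle]$ is a first-order differential operator with bounded smooth coefficients, in particular $[p_\theta^2,\langle x\rangle]:H^2\to H^1$ boundedly, and that the zeroth-order operator $[\langle x\rangle,[p_\theta^2,\langle x\rangle]]$ is bounded on $L^2$.

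The heart of the argument is a two-step bootstrap based on the identity $(p_\theta^2-z)(\langle x\rangle w)=\langle x\rangle(p_\theta^2-z)w+[p_\theta^2,\langle x\rangle]w$. Fix $z\in\Omega$, take $u\in\langle x\rangle^{-2}L^2$ and set $v=R_\theta(z)u\in H^2$. Step one: solve for $\langle x\rangle v$. From the identity, $(p_\theta^2-z)(\langle x\rangle v)=\langle x\rangle u+[p_\theta^2,\langle x\rangle]v$, whose right-hand side lies in $L^2$ with norm $\lesssim\|\langle x\rangle u\|_{L^2}+\|v\|_{H^2}\lesssim\|\langle x\rangle u\|_{L^2}$; hence $\langle x\rangle v=R_\theta(z)\bigl(\langle x\rangle u+[p_\theta^2,\langle x\rangle]v\bigr)\in H^2$ with $\|\langle x\rangle v\|_{H^2}\lesssim\|\langle x\rangle u\|_{L^2}\le\|\langle x\rangle^2 u\|_{L^2}$. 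Step two: solve for $\langle x\rangle^2 v=\langle x\rangle(\langle x\rangle v)$. Applying the identity again and expanding $[p_\theta^2,\langle x\rangle^2]=[p_\theta^2,\langle x\rangle]\langle x\rangle+\langle x\rangle[p_\theta^2,\langle x\rangle]$ together with $\langle x\rangle[p_\theta^2,\langle x\rangle]=[p_\theta^2,\langle x\rangle]\langle x\rangle+[\langle x\rangle,[p_\theta^2,\langle x\rangle]]$, one obtains
\[
(p_\theta^2-z)(\langle x\rangle^2 v)=\langle x\rangle^2 u+2\,[p_\theta^2,\langle x\rangle](\langle x\rangle v)+[\langle x\rangle,[p_\theta^2,\langle x\rangle]]\,v,
\]
and the right-hand side is in $L^2$ with norm $\lesssim\|\langle x\rangle^2 u\|_{L^2}$: the first term because $u\in\langle x\rangle^{-2}L^2$, the second because $\langle x\rangle v\in H^2$ by Step one and $[p_\theta^2,\langle x\rangle]:H^2\to H^1$, the third because $[\langle x\rangle,[p_\theta^2,\langle x\rangle]]$ is bounded on $L^2$. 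Therefore $\|\langle x\rangle^2 R_\theta(z)u\|_{L^2}=\|\langle x\rangle^2 v\|_{L^2}\lesssim\|\langle x\rangle^2 u\|_{L^2}$, with a constant depending only on $\theta$ and on $\sup_\Omega\|R_\theta(z)\|_{L^2\to H^2}$, hence uniform over $\Omega$; this is the asserted bound. Analyticity of $z\mapsto R_\theta(z)$ in $\mathcal L(\langle x\rangle^{-2}L^2)$ then follows at once, since the identities above exhibit $\langle x\rangle^2 R_\theta(z)\langle x\rangle^{-2}$ as a finite sum of products of the $z$-analytic families $R_\theta(z)\in\mathcal L(L^2)$ and $R_\theta(z)\in\mathcal L(L^2,H^2)$ with fixed differential operators, hence as an analytic $\mathcal L(L^2)$-valued map.

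The only subtlety is that $\langle x\rangle$ and $\langle x\rangle^2$ are unbounded, so the operator identities above are a priori merely formal — writing $(p_\theta^2-z)(\langle x\rangle v)$ presupposes $\langle x\rangle v\in D(p_\theta^2)=H^2$, which is part of the conclusion. I would handle this in the routine way, running the computation first with the truncated weights $\chi(x/\rho)\langle x\rangle^k$, $\chi\in C^\infty_c$ with $\chi\equiv1$ near $0$, for which every step is legitimate; the commutator errors $[p_\theta^2,\chi(\cdot/\rho)]\langle x\rangle^k(\langle x\rangle^{k-1}\cdots)$ are supported in $\{|x|\sim\rho\}$ and tend to $0$ in $L^2$ as $\rho\to\infty$ because the weighted function belongs to $H^2$, so the uniform-in-$\rho$ bounds pass to the limit and, by closedness of $p_\theta^2$, simultaneously deliver $\langle x\rangle^k v\in H^2$ and the estimate. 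I expect the real point to get right to be the weight accounting: the single power of $\langle x\rangle$ produced by $[p_\theta^2,\langle x\rangle^2]$ must be absorbed using the $H^2$-bound on $\langle x\rangle v$ coming from Step one, which is exactly why $\langle x\rangle^2$ cannot be handled in one shot; everything else (elliptic regularity, boundedness of the two commutators, the truncation limit) is standard.
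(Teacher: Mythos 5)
Your proof is correct, but it takes a genuinely different route from the paper. The paper works on the contour $\Gamma_{a(\theta)}$ via the isometry $U_\theta:L^2(\Gamma_{a(\theta)})\to L^2(\RR^n)$ and the identity $p_\theta^2=U_\theta(-\Delta_{a(\theta)})U_\theta^{-1}$, and then estimates the \emph{explicit} Schwartz kernel $\langle w_1\rangle^2 R_0(\lambda,w_1,w_2)\langle w_2\rangle^{-2}$ of the weighted free resolvent on $\Gamma_{a(\theta)}$: using the Hankel-function representation of $R_0$ and the off-diagonal exponential decay $e^{-c_\lambda|w_1-w_2|}$ (valid because $\delta<\arg\lambda((w_1-w_2)\cdot(w_1-w_2))^{1/2}<\pi-\delta$), the Schur test gives the uniform bound, with constants controlled on compacts in $\lambda$; analyticity is read off from the explicit kernel. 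You instead avoid the kernel entirely and argue by commutators: since $\phi_\theta$ is the identity for $|x|<2R$ and linear for $|x|>8R$, the operator $p_\theta^2$ has coefficients bounded with all derivatives, so $[p_\theta^2,\langle x\rangle]$ is first order with bounded coefficients and $[\langle x\rangle,[p_\theta^2,\langle x\rangle]]$ is bounded on $L^2$; the two-step bootstrap (first $\langle x\rangle v\in H^2$, then $\langle x\rangle^2 v\in L^2$) is exactly the right weight accounting, and the truncation with $\chi(\cdot/\rho)\langle x\rangle^k$ is the standard way to make the formal commutations legitimate. Your method is structurally more robust — it uses only that $p_\theta^2$ is a uniformly elliptic second-order operator whose coefficients stabilize at infinity, not the explicit free-resolvent kernel — and it also makes the uniformity of $C_{\Omega,\theta}$ over $\Omega$ and the analyticity in $z$ completely transparent, since $\langle x\rangle^2 R_\theta(z)\langle x\rangle^{-2}$ becomes an explicit polynomial expression in $R_\theta(z)$ and fixed bounded operators. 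The trade-off is that the paper's kernel argument is shorter to write out in full and slots naturally into the complex-scaling framework used elsewhere, whereas yours requires the routine but nonzero bookkeeping of the truncation limit (which you correctly identify as the point needing care: a priori one does not even know $\langle x\rangle v\in L^2$, and one must pass to the limit in $H^2$ using the uniform-in-$\rho$ bound and closedness of $p_\theta^2$).
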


\begin{proof}
In view of \eqref{eqn:h} and \eqref{eqn:Dbeta0}, we have
\[|x|/2 < |\phi_\theta(x)|=|1+\theta h(|x|)||x| < 3|x|/2\ \implies\ \langle x\rangle/2 < \langle \phi_\theta(x)\rangle < 3\langle x\rangle/2.\]
Then it is equivalent to prove the lemma with $\langle \phi_\theta(x)\rangle$ replacing $\langle x \rangle $. We recall Proposition \ref{rem:Imagephitheta} to write
\[
\langle \phi_\theta(x)\rangle^2 R_\theta(z) \langle \phi_\theta(x)\rangle^{-2} = U_{a(\theta)}\langle x_{a(\theta)} \rangle^2 ( -\Delta_{a(\theta)} - z )^{-1} \langle x_{a(\theta)} \rangle^{-2} U_{a(\theta)}^{-1},
\]
where $x_{a(\theta)}$ is the coordinate on $\Gamma_{a(\theta)}$. Then it suffices to show that, for any $0 < \alpha <\beta_0$,
\begin{equation}
\label{eqn:wGamma}
\langle w \rangle ^2 (-\Delta_\alpha - \lambda^2)^{-1} \langle w \rangle ^{-2} : L^2(\Gamma_\alpha)\to L^2(\Gamma_\alpha), \ \ \Im (e^{i\alpha}\lambda) > 0,
\end{equation}
is analytic with uniformly bounded norm provided $\lambda$ in any compact subset of $\{\lambda\in\CC:\ Im (e^{i\alpha}\lambda)\}$, where $w$ denotes the coordinate on $\Gamma_\alpha$. To prove \eqref{eqn:wGamma}, consider the integral kernel of that operator:
\begin{equation}
\label{eqn:Kintegralkernel}
    K(\lambda,w_1,w_2) = \langle w_1\rangle^2 R_0(\lambda,w_1,w_2) \langle w_2\rangle^{-2},\ \ w_1,w_2\in\Gamma_\alpha,
\end{equation}
where $R_0(\lambda,w_1,w_2)$ is the integral kernel of $ ( -\Delta_\alpha - \lambda^2 )^{-1}:\,L^2(\Gamma_\alpha) \to L^2(\Gamma_\alpha) $. It is easy to see that
\begin{equation}
\label{eqn:K_estimate_1}
\begin{split}
    |K(\lambda,w_1,w_2)| & \leq (1+|w_1|^2) \, |R_0(\lambda,w_1,w_2)| \, \langle w_2 \rangle^{-2}  \\
    & \leq 2( 1 + |w_1-w_2|^2 + |w_2|^2 ) \,( 1 + |w_2|^2 ) ^{-1} \, |R_0(\lambda,w_1,w_2)|  \\
    & \leq 2 ( 1 + |w_1-w_2|^2 ) |R_0(\lambda,w_1,w_2)|.
\end{split}
\end{equation}
To introduce the explicit formula of $R_0(\lambda,w_1,w_2)$, we recall that one can define $((w_1-w_2)\cdot(w_1-w_2))^{1/2}$ for $w_1,w_2\in\Gamma_\alpha$, see \cite[\S 4.5]{res}. Then we can write
\[ R_0(\lambda,w_1,w_2) = C_n \lambda^{n-2} (\lambda((w_1-w_2)\cdot(w_1-w_2))^{1/2})^{-\frac{n-2}{2}}H^{(1)}_{\frac{n}{2}-1}(\lambda((w_1-w_2)\cdot(w_1-w_2))^{1/2}) \]
where $H^{(1)}_k$ denote the Hankel functions of the first kind, and we can estimate $ |R_0(\lambda,w_1,w_2)| $ as follows:
\begin{equation}
\label{eqn:R0estimate}
|R_0(\lambda,w_1,w_2)| \leq \frac{P_n(\lambda((w_1-w_2)\cdot(w_1-w_2))^{1/2})}{(((w_1-w_2)\cdot(w_1-w_2))^{1/2})^{n-2}} e^{-\Im \lambda((w_1-w_2)\cdot(w_1-w_2))^{1/2}}
\end{equation}
where $P_n$ is a polynomial of degree $(n-3)/2$, see \cite[\S 2.2]{GS2014} and \cite[\S 4.5]{res} for details. Using \eqref{eqn:Gammatheta}, it is easy to see that for any $\delta$ small, there exists $C_\delta>0$ such that $ | \arg ((w_1-w_2)\cdot(w_1-w_2))^{1/2} - \alpha | < \delta $ provided $|w_1-w_2|>C_\delta$. Note that $0 < \arg\lambda + \alpha < \pi$, for every $\lambda$, we can choose $\delta=\delta_\lambda$ such that $2\delta < \arg\lambda + \alpha < \pi-2\delta$, then for $|z-w|>C_\lambda$, we have
\[\delta < \arg \lambda((w_1-w_2)\cdot(w_1-w_2))^{1/2} < \pi-\delta,\]
and thus
\begin{equation}
\label{eqn:expdecay}
 e^{-\Im \lambda((w_1-w_2)\cdot(w_1-w_2))^{1/2}} < e^{-c_\lambda |w_1-w_2|},\ c_\lambda > 0, \ \ \textrm{if }|w_1-w_2|> C_\lambda .
\end{equation}
Then using \eqref{eqn:K_estimate_1}, \eqref{eqn:R0estimate} and \eqref{eqn:expdecay}, we conclude that
\[\sup_{w_1\in\Gamma_\alpha}\int_{\Gamma_\alpha}|K(\lambda,w_1,w_2)| dw_1 < M_\lambda,\quad \sup_{w_2\in\Gamma_\alpha}\int_{\Gamma_\alpha}|K(\lambda,w_1,w_2)| dz < M_\lambda.\]
By Schur criterion, we proved \eqref{eqn:wGamma}, the analyticity in $\lambda$ is easy to see using the explicit formula of $R_0(\lambda,w_1,w_2)$. If $\lambda\in K \Subset \{\lambda\in\CC:\ Im (e^{i\alpha}\lambda)\}$, then there exist $c_K$ and $C_K$ such that
\[
e^{-\Im \lambda((w_1-w_2)\cdot(w_1-w_2))^{1/2}} < e^{-c_K |w_1-w_2|},\ c_K > 0, \ \ \textrm{if }|w_1-w_2|> C_K.
\]
Follow the above argument, there exists $M=M_K>0$ such that
\[
\|\langle w \rangle ^2 (-\Delta_\alpha - \lambda^2)^{-1} \langle w \rangle ^{-2}\|_{L^2(\Gamma_\alpha)\to L^2(\Gamma_\alpha)} < M_K,\quad \textrm{for all }\lambda \in K,
\]
which completes the proof.
\end{proof}

Now we state the main result of this section:
\begin{lem}
\label{lem:MeroContinuation}
Fix any $ \theta\in D_{\beta_0}\cap \CC^+ $ and $\Omega\Subset \{ z: - 2a(\theta) < \arg z < 3 \pi /2 + 2a(\theta) \}$
, there exists $\chi\in\CIc(\RR^n)$, $\chi\equiv 1$ on $B(0,T)$ for some $T>0$ such that for $0 \leq \epsilon <\epsilon_0$, $ H_\epsilon(\theta) - \chi V - z $ is invertible in $\Omega$ and
\[  z \mapsto  ( I + R_{H_\epsilon(\theta)-\chi V} ( z )\, \chi V )^{-1} ,\quad z\in \Omega,
\]
is a meromorphic family of operators on $ L^2 ( \RR^n ) $ with poles of finite rank, where we write $R_{H_\epsilon(\theta)-\chi V}(z)=( H_\epsilon(\theta) - \chi V - z )^{-1}$ for simplicity.
Moreover,
\begin{equation}
\label{eqn:Hieps}
m_{\epsilon,\theta} ( z ) := \frac{ 1}{ 2 \pi i} \tr
\oint_z   ( I + R_{H_\epsilon(\theta)-\chi V} ( w )\,\chi V )^{-1} \partial_w (R_{H_\epsilon(\theta)-\chi V} ( w )\,\chi V ) dw ,
\end{equation}
where the integral is over a positively oriented circle enclosing $ z $
and containing no poles other than possibly $ z $, satisfies \begin{equation}
\label{eqn:lmult}
m_{\epsilon,\theta} ( z ) =  \frac{1}{ 2 \pi i } \tr \oint_z
( w - H_\epsilon(\theta) )^{-1} dw,\quad 0 \leq \epsilon < \epsilon_0,
\end{equation}
where  $H_0(\theta) = H(\theta)$.
\end{lem}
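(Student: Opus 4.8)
The plan is to proceed in three steps: (i) choose the cutoff $\chi$ so that $(1-\chi)\,V\!\circ\!\phi_\theta$ is small in $L^\infty$, which by the free resolvent bounds of \S\ref{Eigenvalues} makes $H_\epsilon(\theta)-\chi V-z$ invertible throughout $\Omega$ by a Neumann series; (ii) identify $z\mapsto R_{H_\epsilon(\theta)-\chi V}(z)\,\chi V$ as an analytic family of compact operators on $L^2(\RR^n)$ over $\Omega$ and apply the analytic Fredholm theorem; (iii) deduce \eqref{eqn:lmult} from the factorization $H_\epsilon(\theta)-z=(H_\epsilon(\theta)-\chi V-z)\bigl(I+R_{H_\epsilon(\theta)-\chi V}(z)\chi V\bigr)$ together with the multiplicativity of the Gohberg--Sigal index.

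For (i): set $W_\theta:=V\circ\phi_\theta$, which lies in $L^\infty(\RR^n)$ and tends to $0$ at infinity by \eqref{eqn:decayV} and Proposition~\ref{rem:Imagephitheta}; we may assume $\Omega$ is open and connected, enlarging it inside the sector if need be. Choose $\chi\in\CIc(\RR^n)$, $0\le\chi\le 1$, $\chi\equiv 1$ on $B(0,T)$, with $T=T(\Omega)$ large enough that $\|(1-\chi)W_\theta\|_{L^\infty}<\eta$, $\eta>0$ to be fixed. Writing $R_{\epsilon,\theta}(z)$ as in \eqref{eqn:Repstheta} with the convention $R_{0,\theta}(z):=R_\theta(z)$, Propositions~\ref{prop:normRtheta} and~\ref{prop:normRepstheta} (using $3\pi/2+2a(\theta)<2\pi-2a(\theta)$ and that $\overline{\Omega}$ is at positive distance from $0$ and from the rays $\arg z\in\{-2a(\theta),\,3\pi/2+2a(\theta)\}$) give
\[ C_\Omega:=\sup\bigl\{\|R_{\epsilon,\theta}(z)\|_{L^2\to L^2}: z\in\Omega,\ 0\le\epsilon<\epsilon_0\bigr\}<\infty . \]
Fixing $\eta<(2C_\Omega)^{-1}$, the operator $I+R_{\epsilon,\theta}(z)(1-\chi)W_\theta$ is invertible on $L^2$ for all $z\in\Omega$, $0\le\epsilon<\epsilon_0$, by a Neumann series; since $H_\epsilon(\theta)-\chi V-z=(p_\theta^2-i\epsilon\phi_\theta(x)^2-z)\bigl(I+R_{\epsilon,\theta}(z)(1-\chi)W_\theta\bigr)$, it follows that $H_\epsilon(\theta)-\chi V-z$ is invertible on $\Omega$ with $R_{H_\epsilon(\theta)-\chi V}(z)=\bigl(I+R_{\epsilon,\theta}(z)(1-\chi)W_\theta\bigr)^{-1}R_{\epsilon,\theta}(z)$, holomorphic in $z$ and with the same mapping properties as $R_{\epsilon,\theta}(z)$.

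For (ii): $\chi V=\chi W_\theta$ is bounded with compact support, hence $\chi W_\theta:L^2(\RR^n)\to\langle x\rangle^{-2}L^2(\RR^n)$ is bounded. If $\epsilon>0$, then $R_{\epsilon,\theta}(z)$ maps $L^2$ into the domain of $p_\theta^2-i\epsilon\phi_\theta(x)^2$, which by \eqref{eqn:Qepstheta} and \cite[\S 3]{Zw-vis} equals $H^2(\RR^n)\cap\langle x\rangle^{-2}L^2(\RR^n)$; if $\epsilon=0$, then $R_\theta(z)$ maps $\langle x\rangle^{-2}L^2$ into itself by Lemma~\ref{lem:Rtheta} and $L^2$ into $H^2$ by Proposition~\ref{prop:ptheta}. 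In either case $R_{\epsilon,\theta}(z)\,\chi W_\theta$ maps $L^2$ boundedly into $H^2\cap\langle x\rangle^{-2}L^2$, which embeds compactly into $L^2$ by Lemma~\ref{lem:cpct embed}; composing with the bounded operator $\bigl(I+R_{\epsilon,\theta}(z)(1-\chi)W_\theta\bigr)^{-1}$ shows that $z\mapsto R_{H_\epsilon(\theta)-\chi V}(z)\,\chi V$ is an analytic family of compact operators on $L^2(\RR^n)$ over $\Omega$. From $H_\epsilon(\theta)-z=(H_\epsilon(\theta)-\chi V-z)\bigl(I+R_{H_\epsilon(\theta)-\chi V}(z)\chi V\bigr)$ and the invertibility of the first factor, $I+R_{H_\epsilon(\theta)-\chi V}(z)\chi V$ is invertible for every $z\in\Omega$ off the poles of $R_{H_\epsilon(\theta)}(\cdot)$, which form a discrete subset of $\Omega$ (Lemma~\ref{lem:eigenvalue} for $\epsilon>0$; the Remark after Proposition~\ref{prop:spectrum} together with Lemma~\ref{lem:resonance} for $\epsilon=0$). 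The analytic Fredholm theorem then gives that $z\mapsto\bigl(I+R_{H_\epsilon(\theta)-\chi V}(z)\chi V\bigr)^{-1}$ is meromorphic on $\Omega$ with poles of finite rank, and $R_{H_\epsilon(\theta)}(z)=\bigl(I+R_{H_\epsilon(\theta)-\chi V}(z)\chi V\bigr)^{-1}R_{H_\epsilon(\theta)-\chi V}(z)$.

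For (iii): $m_{\epsilon,\theta}(z)$ as defined in \eqref{eqn:Hieps} is precisely the Gohberg--Sigal multiplicity of $z$ as a characteristic value of $A(z):=I+R_{H_\epsilon(\theta)-\chi V}(z)\chi V$. In the factorization $H_\epsilon(\theta)-z=(H_\epsilon(\theta)-\chi V-z)\,A(z)$ the first factor is holomorphic and invertible on $\Omega$, so by the multiplicativity of the Gohberg--Sigal index under composition and its vanishing on holomorphic invertible families (see \cite[\S 4.2]{res}) the multiplicity $m_{\epsilon,\theta}(z)$ equals the multiplicity of $z$ as a pole of $R_{H_\epsilon(\theta)}(\cdot)$, namely $\frac{1}{2\pi i}\tr\oint_z(w-H_\epsilon(\theta))^{-1}dw$, the trace of the Riesz projection of $H_\epsilon(\theta)$ at $z$; this is \eqref{eqn:lmult}, with $H_0(\theta)=H(\theta)$. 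The main obstacle is the compactness in step (ii) at $\epsilon=0$, where there is no quadratic damping: it rests on the weighted resolvent estimate of Lemma~\ref{lem:Rtheta} --- exactly the place where the decay of outgoing solutions created by complex scaling is used --- and on the compact embedding Lemma~\ref{lem:cpct embed}. Step (iii) is then formal, up to the minor point that the factor $H_\epsilon(\theta)-\chi V-z$ is an unbounded operator, which is handled by working on its domain $H^2\cap\langle x\rangle^{-2}L^2$.
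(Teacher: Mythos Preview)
Your proof is correct. Steps (i) and (ii) reproduce the paper's argument essentially verbatim: the same Neumann-series construction of $R_{H_\epsilon(\theta)-\chi V}(z)$ from the free resolvent bounds \eqref{eqn:normRtheta}--\eqref{eqn:normRepstheta}, and the same compactness argument splitting on $\epsilon>0$ (range of $R_{\epsilon,\theta}$ contained in the domain of the deformed harmonic oscillator) versus $\epsilon=0$ (the weighted resolvent bound of Lemma~\ref{lem:Rtheta}), followed by Lemma~\ref{lem:cpct embed}.

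The only substantive difference is in step (iii). The paper proves \eqref{eqn:lmult} by an explicit residue computation: it writes the principal part $(I+R_{H_\epsilon(\theta)-\chi V}(w)\chi V)^{-1}=\sum_{j=1}^{J}A_j(w-z)^{-j}+\text{holomorphic}$ with finite-rank $A_j$, expands both contour integrals via the Cauchy formula, and matches the resulting finite sums term by term using cyclicity of the trace on the $A_j$ factors. You instead appeal to the multiplicativity of the Gohberg--Sigal logarithmic residue under the factorization $H_\epsilon(\theta)-z=(H_\epsilon(\theta)-\chi V-z)\,A(z)$, with the first factor holomorphic and invertible. This is legitimate and conceptually cleaner; unpacking that multiplicativity in the present (mixed-space, one factor unbounded) setting amounts exactly to the finite-rank cyclicity calculation the paper carries out by hand. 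Two minor corrections: the Gohberg--Sigal material in \cite{res} is in Appendix~C rather than \S4.2, and for $\epsilon=0$ the domain of $H(\theta)-\chi V$ is $H^2(\RR^n)$, not $H^2\cap\langle x\rangle^{-2}L^2$ (the latter is the domain only when $\epsilon>0$); neither affects the argument.
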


\begin{proof}
We modify the argument in \cite[\S 4]{Zw-vis} to our setting. First there exists $\delta=\delta_\Omega$ such that $\Omega\subset \C_\delta := \{ z : -2a(\theta) + \delta < \arg z < 3\pi/2 + 2a(\theta) -\delta,\ |z|>\delta\}$, we recall \eqref{eqn:normRtheta} and \eqref{eqn:normRepstheta} that uniformly for $0 \leq \epsilon < \epsilon_0$, we have
\begin{equation}
\label{eqn:normRepsthetaz}
R_{\epsilon,\theta} ( z ) = \mathcal O_\delta ( 1/|z| ) : L^2 ( \RR^n ) \to L^2 ( \RR^n ) , \quad z \in \C_\delta.
\end{equation}
Hence $\|R_{\epsilon,\theta} ( z )\|_{ L^2 ( \RR^n ) \to L^2 ( \RR^n )} < C_\delta,\ \forall z\in\C_\delta$, for some $C_\delta>0$. In view of \eqref{eqn:decayV}, for $T$ sufficiently large, we have $\|(1-\chi)V\|_{L^\infty}<1/2C_\delta$ and thus
\begin{equation}
\label{eqn:Neumann estimate}
\|R_{\epsilon,\theta} ( z ) (1-\chi)V\|_{ L^2 ( \RR^n ) \to L^2 ( \RR^n )}<1/2,\quad \textrm{for all }z\in\C_\delta.
\end{equation}
Then $( I + R_{\epsilon,\theta} ( z ) (1-\chi)V )$ is invertible by the Neumann series argument, which implies that $ H_\epsilon(\theta) - \chi V - z $ is invertible and
\begin{equation}
\label{eqn:RH-chiV}
R_{H_\epsilon(\theta)-\chi V}(z)=( H_\epsilon(\theta) - \chi V - z )^{-1}=( I + R_{\epsilon,\theta} ( z ) (1-\chi)V )^{-1} R_{\epsilon,\theta} ( z ),\quad\forall z\in \C_\delta.
\end{equation}
Since $\chi V\in L^\infty(\RR^n)$, \eqref{eqn:normRepsthetaz} and \eqref{eqn:RH-chiV} imply that for $z\in\C_\delta$, $|z|\gg 1$, both $I + \chi V R_{H_\epsilon(\theta)-\chi V}(z)$ and $I + R_{H_\epsilon(\theta)-\chi V}(z) \chi V$ are invertible by the Neumann series argument. Hence we have
\begin{equation}
\label{eqn:newmerocont}
\begin{split}
    R_{H_\epsilon(\theta)}(z) & = R_{H_\epsilon(\theta)-\chi V}(z) ( I + \chi V R_{H_\epsilon(\theta)-\chi V}(z) )^{-1}  \\
    & = R_{H_\epsilon(\theta)-\chi V}(z)\sum_{j=0}^\infty (-1)^j ( \chi V R_{H_\epsilon(\theta)-\chi V}(z) )^j  \\
    & = R_{H_\epsilon(\theta)-\chi V}(z) \left( I - \chi V \,\sum_{j=0}^\infty (-1)^j ( R_{H_\epsilon(\theta)-\chi V}(z) \chi V )^j\, R_{H_\epsilon(\theta)-\chi V}(z) \right)  \\
    & = R_{H_\epsilon(\theta)-\chi V}(z) [ I - \chi V ( I + R_{H_\epsilon(\theta)-\chi V}(z) \chi V )^{-1} R_{H_\epsilon(\theta)-\chi V}(z) ],
\end{split}
\end{equation}

Using \eqref{eqn:RH-chiV}, we have
\[
R_{H_\epsilon(\theta)-\chi V}(z)\chi V = ( I + R_{\epsilon,\theta} ( z ) (1-\chi)V )^{-1} R_{\epsilon,\theta}(z) \chi V.
\]
For $\epsilon>0$, $R_{\epsilon,\theta}(z): L^2(\RR^n)\to H^2(\RR^n)\cap \langle\phi_\theta(x)\rangle^{-2}L^2(\RR^n)$, then Lemma \ref{lem:cpct embed} implies that $R_{\epsilon,\theta}(z): L^2(\RR^n)\to L^2(\RR^n)$ is compact. For $\epsilon=0$, note that $\chi V: L^2(\RR^n)\to \langle x\rangle^{-2} L^2(\RR^n)$, by Lemma \ref{lem:Rtheta} we have $R_\theta(z)\chi V: L^2(\RR^n)\to H^2(\RR^n)\cap\langle x\rangle^{-2} L^2(\RR^n)$, then Lemma \ref{lem:cpct embed} implies that $R_\theta(z)\chi V: L^2(\RR^n)\to L^2(\RR^n)$ is compact. Hence we can conclude that $z\mapsto R_{H_\epsilon(\theta)-\chi V}(z) \chi V$ is an analytic family of compact operators for $z\in\C_\delta,\ \ 0\leq \epsilon < \epsilon_0$, and thus $z\mapsto (  I + R_{H_\epsilon(\theta)-\chi V}(z) \chi V )^{-1}$ is a meromorphic family of operators in the same range of $z$.

Then we recall Lemma \ref{lem:resonance} and \ref{lem:eigenvalue} that $R_{H_\epsilon(\theta)}(z)$ is meromorphic in $-2a(\theta)<\arg z<3\pi/2 + 2a(\theta)$, by the identity principle of meromorphic operators, we conclude that \eqref{eqn:newmerocont} holds for all $z\in \C_\delta$ in the sense of meromorphic family of operators.

To obtain the multiplicity formula, we assume that $ z \in \Omega$, then there exists a neighborhood $ z \in U \subset \Omega$ and finite rank operators $A_j,\ 1 \leq j \leq J$ such that
\[ (  I + R_{H_\epsilon(\theta)-\chi V}(w) \chi V )^{-1} - \sum_{j=1}^J \frac{A_j}{(w-z)^j}\quad\textrm{is holomorphic in }w \in U.\]
Let $C_z \subset U$ be a positively oriented circle enclosing $ z $
and containing no poles of $(  I + R_{H_\epsilon(\theta)-\chi V}(w) \chi V )^{-1}$ other than possibly $ z $, thus it also contains no poles of $( w - H_\epsilon(\theta) )^{-1}$ other than possibly $ z $ as a consequence of \eqref{eqn:newmerocont}. On the one hand, we can compute
\begin{equation}
\label{eqn:lmultproof1}
\begin{split}
    m_{\epsilon,\theta} ( z ) & = \frac{ 1}{ 2 \pi i}\tr\int_{C_z}   (  I + R_{H_\epsilon(\theta)-\chi V}(w) \chi V )^{-1} \partial_w (R_{H_\epsilon(\theta)-\chi V}(w) \chi V ) dw \\
    & = \frac{ 1}{ 2 \pi i}\tr\int_{C_z}   ( I + R_{H_\epsilon(\theta)-\chi V}(w) \chi V )^{-1} R_{H_\epsilon(\theta)-\chi V}(w)^2 \chi V  dw \\
    & = \frac{ 1}{ 2 \pi i}\tr\int_{C_z} \sum_{j=1}^J \frac{A_j R_{H_\epsilon(\theta)-\chi V}(w)^2 \chi V}{(w-z)^j} dw \\
    & = \sum_{j=1}^J \frac{1}{(j-1)!} \tr \partial_z^{j-1} ( A_j R_{H_\epsilon(\theta)-\chi V}(z)^2 \chi V ) \\
    & = \sum_{j=1}^J \sum_{k=0}^{j-1} \frac{1}{k!(j-1-k)!} \tr A_j \, \partial_z^k R_{H_\epsilon(\theta)-\chi V}(z)\,
    \partial_z^{j-1-k} R_{H_\epsilon(\theta)-\chi V}(z) \chi V.
\end{split}
\end{equation}
On the other hand, by \eqref{eqn:newmerocont} we have
\begin{equation}
\label{eqn:lmultproof2}
\begin{split}
    { }&\quad\  \frac{1}{ 2 \pi i } \tr \oint_z ( w - H_\epsilon(\theta) )^{-1} dw  \\
    & = \frac{1}{ 2 \pi i } \tr \int_{C_z} R_{H_\epsilon(\theta)-\chi V}(w) \chi V ( I + R_{H_\epsilon(\theta)-\chi V}(w) \chi V )^{-1} R_{H_\epsilon(\theta)-\chi V}(w) dw \\
    & = \frac{1}{ 2 \pi i } \tr \int_{C_z} \sum_{j=1}^J \frac{R_{H_\epsilon(\theta)-\chi V}(w) \chi V A_j R_{H_\epsilon(\theta)-\chi V}(w)}{(w-z)^j} dw \\
    & = \sum_{j=1}^J \frac{1}{(j-1)!} \tr \partial_z^{j-1} ( R_{H_\epsilon(\theta)-\chi V}(z) \chi V A_j R_{H_\epsilon(\theta)-\chi V}(z) ) \\
    & = \sum_{j=1}^J \sum_{k=0}^{j-1} \frac{1}{k!(j-1-k)!} \tr \partial_z^{j-1-k} R_{H_\epsilon(\theta)-\chi V}(z) \chi V A_j \, \partial_z^k R_{H_\epsilon(\theta)-\chi V}(z).
\end{split}
\end{equation}
Now we compare \eqref{eqn:lmultproof1} and \eqref{eqn:lmultproof2}. Since $A_j$ factors have finite rank, we can apply cyclicity of the trace to obtain the multiplicity formula \eqref{eqn:lmult}.

\end{proof}

\section{Proof of convergence}
\label{poc}

The proof of convergence is based on Lemma \ref{lem:resonance}, Lemma \ref{lem:eigenvalue}, Lemma \ref{lem:MeroContinuation} and the following lemma:

\begin{lem}
\label{lem:poc estimate}
Fix any $ \theta\in D_{\beta_0}\cap \CC^+ $ and $\Omega\Subset \{ z: - 2a(\theta) < \arg z < 3 \pi /2 + 2a(\theta) \}$, there exists $\chi\in\CIc(\RR^n)$, $\chi\equiv 1$ on $B(0,T)$ for some $T>0$ such that for $0<\epsilon<\epsilon_0$,
\[ T_{\epsilon,\theta}(z):= ( H_\epsilon(\theta)-\chi V - z)^{-1} \phi_\theta(x)^2 ( H(\theta)-\chi V - z)^{-1}\chi V\]
is an analytic family of operators: $L^2(\RR^n)\to L^2(\RR^n)$. Furthermore, there exists $C=C_{\Omega,\theta}$ such that
\begin{equation}
\label{eqn:Teps estimate}
\|T_{\epsilon,\theta}(z)\|_{L^2(\RR^n)\to L^2(\RR^n)}\leq C,\quad z\in\Omega,\quad\textrm{uniformly for }0<\epsilon<\epsilon_0.
\end{equation}
\end{lem}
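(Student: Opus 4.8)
The plan is to factor $T_{\epsilon,\theta}(z)=R_{H_\epsilon(\theta)-\chi V}(z)\,\phi_\theta(x)^2\,R_{H(\theta)-\chi V}(z)\,\chi V$ through the weighted space $\langle x\rangle^{-2}L^2(\RR^n)$ of Lemma~\ref{lem:cpct embed}, so that the quadratically growing multiplier $\phi_\theta(x)^2$ becomes harmless. Reading the product from right to left, the point is the chain of bounded maps $\chi V\colon L^2\to\langle x\rangle^{-2}L^2$, then $R_{H(\theta)-\chi V}(z)\colon\langle x\rangle^{-2}L^2\to\langle x\rangle^{-2}L^2$, then $\phi_\theta(x)^2\colon\langle x\rangle^{-2}L^2\to L^2$, and finally $R_{H_\epsilon(\theta)-\chi V}(z)\colon L^2\to L^2$; each will be shown uniformly bounded on $\Omega$ (the last one also uniformly in $0\le\epsilon<\epsilon_0$) and each is either $z$-independent or an analytic operator family, so the composition is analytic with norm bounded by the product of the four bounds.

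First I would fix $\delta=\delta_\Omega>0$ with $\Omega\subset\C_\delta$ as in the proof of Lemma~\ref{lem:MeroContinuation}, so that \eqref{eqn:normRepsthetaz} gives $\|R_{\epsilon,\theta}(z)\|_{L^2\to L^2}\le C_\delta$ on $\C_\delta$, uniformly for $0\le\epsilon<\epsilon_0$; and I would invoke Lemma~\ref{lem:Rtheta} --- applicable because $|a(\theta)|<\pi/8$ forces $\Omega\Subset\{z:-2a(\theta)<\arg z<2\pi-2a(\theta)\}$ --- to obtain $C'=C_{\Omega,\theta}$ with $\|R_\theta(z)\|_{\langle x\rangle^{-2}L^2\to\langle x\rangle^{-2}L^2}\le C'$ on $\Omega$. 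Then I would choose $\chi\in\CIc(\RR^n)$ with $0\le\chi\le1$, $\chi\equiv1$ on $B(0,T)$, and $T$ so large that $\|(1-\chi)V\|_{L^\infty}<\min\{1/(2C_\delta),\,1/(2C')\}$, which is possible by \eqref{eqn:decayV}. With this $\chi$: $\chi V$ is bounded $L^2\to\langle x\rangle^{-2}L^2$ since $\langle x\rangle^2\chi\in L^\infty$ and $V\in L^\infty$; $\phi_\theta(x)^2$ is bounded $\langle x\rangle^{-2}L^2\to L^2$ because $|\phi_\theta(x)^2|=|1+\theta h(|x|)|^2|x|^2\le (9/4)\langle x\rangle^2$ (as in the proof of Lemma~\ref{lem:Rtheta}); and multiplication by $(1-\chi)V$ has operator norm $\le\|(1-\chi)V\|_{L^\infty}$ both on $L^2$ and on $\langle x\rangle^{-2}L^2$.

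Next I would insert this into the resolvent identity \eqref{eqn:RH-chiV}, $R_{H_\epsilon(\theta)-\chi V}(z)=(I+R_{\epsilon,\theta}(z)(1-\chi)V)^{-1}R_{\epsilon,\theta}(z)$. For $\epsilon=0$ the Neumann series converges in the $\langle x\rangle^{-2}L^2$ operator norm (since $C'\|(1-\chi)V\|_{L^\infty}<1/2$ and $R_\theta(z)$ is analytic on that space by Lemma~\ref{lem:Rtheta}), so $R_{H(\theta)-\chi V}(z)$ is bounded and analytic $\langle x\rangle^{-2}L^2\to\langle x\rangle^{-2}L^2$, uniformly on $\Omega$. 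For $0\le\epsilon<\epsilon_0$ the Neumann series converges in the $L^2$ operator norm (since $C_\delta\|(1-\chi)V\|_{L^\infty}<1/2$ and $R_{\epsilon,\theta}(z)$ is analytic on $\C_\delta$), so $R_{H_\epsilon(\theta)-\chi V}(z)$ is bounded and analytic $L^2\to L^2$ on $\Omega$, uniformly in $\epsilon$. Composing the four maps yields that $T_{\epsilon,\theta}(z)$ is an analytic operator family $L^2\to L^2$ on $\Omega$ with $\|T_{\epsilon,\theta}(z)\|_{L^2\to L^2}$ bounded by the product of the four uniform constants; this product is the $C=C_{\Omega,\theta}$ of \eqref{eqn:Teps estimate}, and it is uniform in $0<\epsilon<\epsilon_0$ since only the leftmost factor depends on $\epsilon$ and its bound is $\epsilon$-uniform.

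The only substantive point --- the rest is Neumann-series bookkeeping --- is the quadratic growth of $\phi_\theta(x)^2$, which is precisely why one cannot work in $L^2$ throughout and must route through $\langle x\rangle^{-2}L^2$; this forces the weighted estimate of Lemma~\ref{lem:Rtheta} and the requirement that a single cutoff radius $T$ make both the $L^2$ and the $\langle x\rangle^{-2}L^2$ Neumann series converge. It is also essential that the middle resolvent is $(H(\theta)-\chi V-z)^{-1}$ rather than $(H(\theta)-z)^{-1}$: the former is genuinely holomorphic on $\Omega$, whereas the latter is only meromorphic there, with poles at the resonances.
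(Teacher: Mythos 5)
Your proposal is correct and matches the paper's own proof in all essentials: both route the quadratic weight through $\langle x\rangle^{-2}L^2$ via Lemma~\ref{lem:Rtheta}, choose $T$ large enough that the Neumann series for $(I+R_\theta(z)(1-\chi)V)^{-1}$ converges in both the $L^2$ and the weighted operator norms, and bound the $\epsilon$-dependent resolvent on the left by the uniform estimate \eqref{eqn:normRepsthetaz}. Your explicit observation that $|a(\theta)|<\pi/8$ guarantees $\Omega$ lies in the sector where Lemma~\ref{lem:Rtheta} applies is a small but genuine clarification the paper leaves implicit.
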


\begin{proof}
We recall the proof of Lemma \ref{lem:MeroContinuation} that for $T$ sufficiently large, $H_\epsilon(\theta)-\chi V - z$ is invertible, then \eqref{eqn:Neumann estimate} and \eqref{eqn:RH-chiV} imply that \[
\| (H_\epsilon(\theta)-\chi V - z)^{-1} \|_{L^2(\RR^n)\to L^2(\RR^n)}\leq C_\Omega,\quad z\in\Omega,
\]
for some $C_\Omega>0$. Hence it suffices to prove
\begin{equation}
\label{eqn:Teps estimate2}
\| \phi_\theta(x)^2 ( H(\theta)-\chi V - z)^{-1}\chi V\|_{L^2(\RR^n)\to L^2(\RR^n)}\leq C_\Omega,\quad z\in\Omega.
\end{equation}
By Lemma \ref{lem:Rtheta}, we have $\|R_\theta(z)\|_{\langle x\rangle^{-2}L^2 \to\langle x\rangle^{-2}L^2 }\leq C$. We can choose $T$ sufficiently large such that \eqref{eqn:Neumann estimate} still holds and $\|(1-\chi) V\|_{L^\infty}<1/2C$, then we have
\[\|R_\theta(z)(1-\chi V)\|_{\langle x\rangle^{-2}L^2 \to\langle x\rangle^{-2}L^2 } < 1/2,\quad z\in\Omega.\]
Hence $( I + R_\theta(z)(1-\chi) V )^{-1}:L^2\to L^2$ defined by the Neumann series in the proof of Lemma \ref{lem:MeroContinuation} also maps $\langle x\rangle^{-2}L^2$ to $\langle x\rangle^{-2}L^2$ by the same Neumann series and we have
\begin{equation}
\label{eqn:Teps estimate3}
    \|( I + R_\theta(z)(1-\chi) V )^{-1}\|_{\langle x\rangle^{-2}L^2\to \langle x\rangle^{-2}L^2}<2,\quad z\in\Omega.
\end{equation}
Since $\chi V:L^2\to\langle x\rangle^{-2}L^2 $ with the operator norm bounded by $\|\langle x\rangle^2 \chi V\|_{L^\infty}= C_\Omega$, by Lemma \ref{lem:Rtheta}, \eqref{eqn:RH-chiV} and \eqref{eqn:Teps estimate3} we conclude that
\[
\begin{split}
    { }&\quad\  \|(H(\theta)-\chi V -z )^{-1}\chi V\|_{L^2\to \langle x\rangle^{-2}L^2}  \\
    & = \|( I + R_\theta(z)(1-\chi) V )^{-1} R_\theta(z)\chi V\|_{L^2\to \langle x\rangle^{-2}L^2} \\
    & \leq \|( I + R_\theta(z)(1-\chi) V )^{-1}R_\theta(z)\|_{\langle x\rangle^{-2}L^2\to \langle x\rangle^{-2}L^2} \|\chi V\|_{L^2\to \langle x\rangle^{-2}L^2} \\
    & \leq C_\Omega,
\end{split}
\]
which implies \eqref{eqn:Teps estimate2}.
\end{proof}

Now we state the result about the convergence of eigenvalues of the deformed operator $H_\epsilon(\theta)$:

\begin{thm}
\label{t:2}
Fix $ \theta \in D_{\beta_0}\cap\CC^+ $ and $\Omega\Subset \{ z: - 2a(\theta) < \arg z < 3 \pi /2 + 2a(\theta) \}$, there exists  $\delta_0=\delta_0(\Omega)$ satisfying the following:

For any $0<\delta<\delta_0$ there exists $\epsilon'>0$ such that for any $z\in\Omega$ with $m_\theta(z)>0$ and $0<\epsilon<\epsilon'$, $H_\epsilon(\theta)$ has $m_\theta(z)$ eigenvalues in $B(z,\delta)$, where $m_\theta(z)$ is the multiplicity of the eigenvalue of $H(\theta)$ at z - see \eqref{eqn:resmult}.
\end{thm}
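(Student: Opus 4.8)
The plan is to count the eigenvalues of $H_\epsilon(\theta)$ in $\Omega$ by counting the characteristic values of the holomorphic family of Fredholm operators $w\mapsto I+K_\epsilon(w)$, where $K_\epsilon(w):=R_{H_\epsilon(\theta)-\chi V}(w)\,\chi V$ and $\chi\in\CIc(\RR^n)$ is the cutoff furnished by Lemma~\ref{lem:MeroContinuation} for $\Omega$ (taken large enough to serve Lemma~\ref{lem:poc estimate} as well), and then to show that this count is stable as $\epsilon\to0+$ by a Rouch\'e-type argument. By Lemma~\ref{lem:MeroContinuation}, for every $0\le\epsilon<\epsilon_0$ the operator $H_\epsilon(\theta)-\chi V-w$ is invertible throughout $\Omega$, so $w\mapsto K_\epsilon(w)$ is a holomorphic family of compact operators on $\Omega$, $(I+K_\epsilon(w))^{-1}$ is meromorphic with finite-rank principal parts, and the multiplicity $m_{\epsilon,\theta}(\cdot)$ of \eqref{eqn:Hieps}--\eqref{eqn:lmult} identifies the eigenvalues of $H_\epsilon(\theta)$ in $\Omega$, counted with algebraic multiplicity, with the characteristic values of $I+K_\epsilon$, counted with their Gohberg--Sigal multiplicity; for $\epsilon=0$ this is $H(\theta)$ and $m_{0,\theta}=m_\theta$.

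First I would record the quantitative convergence $K_\epsilon\to K_0$. Since $H_\epsilon(\theta)-H(\theta)=-i\epsilon\,\phi_\theta(x)^2$, the second resolvent identity gives, for all $w\in\Omega$,
\[
K_\epsilon(w)-K_0(w)=i\epsilon\,R_{H_\epsilon(\theta)-\chi V}(w)\,\phi_\theta(x)^2\,R_{H(\theta)-\chi V}(w)\,\chi V=i\epsilon\,T_{\epsilon,\theta}(w),
\]
so Lemma~\ref{lem:poc estimate} yields $\sup_{w\in\Omega}\|K_\epsilon(w)-K_0(w)\|_{L^2\to L^2}\le C\epsilon$, uniformly for $0<\epsilon<\epsilon_0$. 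Next I would fix $\delta_0=\delta_0(\Omega)$: the eigenvalues of $H(\theta)$ in a fixed compact neighbourhood of $\overline\Omega$ form a finite set, since that neighbourhood is a compact subset of $\CC\setminus\sigma_{\textrm{ess}}(H(\theta))=\CC\setminus e^{-2ia(\theta)}[0,\infty)$ on which $\sigma(H(\theta))$ is discrete (the analytic Fredholm argument in the proof of Proposition~\ref{prop:spectrum}, together with Lemma~\ref{lem:resonance}). Taking $\delta_0$ below half the minimal distance between distinct such eigenvalues and below their distance to $\partial\Omega$, I get that for any eigenvalue $z_0\in\Omega$ of $H(\theta)$ and any $0<\delta<\delta_0$ the closed disc $\overline{B(z_0,\delta)}$ lies in $\Omega$ and meets $\sigma(H(\theta))$ only at $z_0$, so $I+K_0$ is invertible on $\gamma:=\partial B(z_0,\delta)$.

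It then remains to transfer the multiplicity count across $\gamma$. Because $K_\epsilon(w)$ is compact and holomorphic on a neighbourhood $N\Subset\Omega$ of the compact set $\bigcup_{z_0}\overline{B(z_0,\delta)}$, and the operators $K_0(w)$, $w\in N$, map $L^2$ into $H^2\cap\langle x\rangle^{-2}L^2(\RR^n)$ with uniformly bounded norm (ellipticity of $p_\theta^2$ together with the bound \eqref{eqn:Teps estimate2} from the proof of Lemma~\ref{lem:poc estimate}, and Lemmas~\ref{lem:cpct embed}--\ref{lem:Rtheta}), I would split $K_\epsilon(w)=G(w)+E_\epsilon(w)$, with $G$ a holomorphic family of operators of some fixed finite rank into a fixed finite-dimensional subspace $W\subset L^2$ and $\sup_{w\in N}\|E_\epsilon(w)\|<\tfrac12$ for all small $\epsilon$ (the $\mathcal O(\epsilon)$ bound makes $E_\epsilon$ a small perturbation of $E_0$, while the finite-rank $G$ comes from approximating the compact embedding $H^2(\mathrm{compact})\hookrightarrow L^2$ in operator norm and discarding a far-field tail). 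Multiplying $I+K_\epsilon$ on the left by the invertible holomorphic family $(I+E_\epsilon)^{-1}$ and applying the Schur/Weinstein--Aronszajn reduction $I+A(w)B(w)\leftrightarrow I_W+B(w)A(w)$ turns the problem into a holomorphic square-matrix family $M_\epsilon(w):=I_W+g(w)(I+E_\epsilon(w))^{-1}J$ (with $G(w)=Jg(w)$, $J\colon W\hookrightarrow L^2$) whose determinant, a scalar holomorphic function on $N$, vanishes in $B(z_0,\delta)$ exactly at the eigenvalues of $H_\epsilon(\theta)$, to the correct order. Since $(I+E_\epsilon)^{-1}\to(I+E_0)^{-1}$ uniformly on $N$ at rate $\mathcal O(\epsilon)$, one gets $\det M_\epsilon\to\det M_0$ uniformly on $N$, with $\det M_0\neq0$ on $\gamma$; the classical Rouch\'e theorem then gives, for all $\epsilon$ below some $\epsilon'$, exactly as many zeros of $\det M_\epsilon$ in $B(z_0,\delta)$ as of $\det M_0$, namely $m_\theta(z_0)$, hence $m_\theta(z_0)$ eigenvalues of $H_\epsilon(\theta)$ there; taking the minimum over the finitely many eigenvalues $z_0\in\Omega$ of $H(\theta)$ produces a single $\epsilon'=\epsilon'(\Omega,\theta,\delta)$. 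I expect the main obstacle to be precisely this last step: for $n\ge2$ the operators $K_\epsilon(w)$ are not trace class, so the multiplicities cannot be read off a Fredholm determinant directly and must be transported through the finite-dimensional reduction (equivalently, through the Gohberg--Sigal theory for finitely meromorphic Fredholm families, cf.\ \cite{res} and \cite{Zw-vis}); everything else reduces to the uniform-in-$\epsilon$ estimate of Lemma~\ref{lem:poc estimate}, which is already available.
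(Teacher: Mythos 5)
Your proposal is correct and structurally parallel to the paper's own proof: both isolate the finitely many eigenvalues of $H(\theta)$ in $\Omega$, choose $\delta_0$ so that the discs $B(z_j,\delta_0)$ are disjoint, use the second resolvent identity to write $R_{H_\epsilon(\theta)-\chi V}(w)\chi V - R_{H(\theta)-\chi V}(w)\chi V = i\epsilon\, T_{\epsilon,\theta}(w)$, invoke the uniform bound of Lemma~\ref{lem:poc estimate} to get an $\mathcal{O}(\epsilon)$ perturbation on $\partial B(z_j,\delta)$, and then transfer the Gohberg--Sigal multiplicity across the circle, identifying it with the eigenvalue multiplicity of $H_\epsilon(\theta)$ via \eqref{eqn:lmult}. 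The one point of divergence is the Rouch\'e step: the paper simply cites the operator-valued Gohberg--Sigal--Rouch\'e theorem (\cite{gohberg1971operator}, \cite[Appendix C]{res}), whereas you unfold it into a finite-rank (Weinstein--Aronszajn/Schur) reduction to a holomorphic matrix determinant and then apply the classical scalar Rouch\'e theorem. Your reduction is correct --- the factorization $I+K_\epsilon = (I+E_\epsilon)(I+(I+E_\epsilon)^{-1}G)$ with $G$ of fixed finite rank, holomorphic in $w$, and $\|E_\epsilon\|<1/2$ uniformly is exactly what underlies the Gohberg--Sigal theory, and your uniform-in-$w$ smallness of $E_\epsilon - E_0 = i\epsilon T_{\epsilon,\theta}$ carries through the determinant --- but it amounts to a self-contained re-derivation of the lemma the paper prefers to quote. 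What you gain is not needing the operator-Rouch\'e reference; what you pay is the extra bookkeeping in constructing the finite-rank $G(w)$ and verifying that the determinant's vanishing order matches $m_{\epsilon,\theta}$. Either route is valid, and both ultimately hinge on Lemma~\ref{lem:poc estimate}.
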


\begin{proof}
Since the eigenvalues of $H(\theta)$ are isolated and $\overline{\Omega}$ is compact, there are finite many $z\in \Omega$ with $m_\theta(z)>0$, we denote them by $z_1,\ldots,z_J$. Then we can choose $\delta_0$ such that $B(z_j,\delta_0)$, $j=1,\ldots,J$ are disjoint.

Now we fix $\delta<\delta_0$, by Lemma \ref{lem:MeroContinuation}, $I+R_{H(\theta)-\chi V}(w)\chi V$ is invertible in $\Omega\setminus\{z_1,\ldots,z_J\}$, thus we have
\[
\| ( I + R_{H(\theta)-\chi V}(w)\chi V )^{-1} \|_{L^2\to L^2} < C(\delta),\quad w\in\partial B(z,\delta),\ \textrm{for all }z\in\{z_1,\ldots,z_J\},
\]
for some $C(\delta)>0$. We note that in the notation of Lemma \ref{lem:poc estimate},
\[ I + R_{H_\epsilon(\theta)-\chi V}(w)\chi V  - ( I + R_{H(\theta)-\chi V}(w)\chi V ) = i\epsilon T_{\epsilon,\theta}(w).\]
Hence there exists $0<\epsilon'<\epsilon_0$ such that for any $\epsilon<\epsilon'$,
\[
\| ( I + R_{H(\theta)-\chi V}(w)\chi V )^{-1} (I + R_{H_\epsilon(\theta)-\chi V}(w)\chi V  - ( I + R_{H(\theta)-\chi V}(w)\chi V ))\|<1
\]
on $\partial B(z,\delta)$. Now we apply Gohberg-Sigal-Rouch\'e theorem, see \cite{gohberg1971operator} and \cite[Appendix C.]{res} to obtain that
\[
\begin{split}
    { }&\quad\  \frac{1}{ 2 \pi i } \tr \int_{\partial B(z,\delta)} ( I + R_{H_\epsilon(\theta)-\chi V}(w) \chi V )^{-1} \partial_w(R_{H_\epsilon(\theta)-\chi V}(w)\chi V) dw  \\
    & = \frac{1}{ 2 \pi i } \tr \int_{\partial B(z,\delta)} ( I + R_{H(\theta)-\chi V}(w) \chi V )^{-1} \partial_w(R_{H(\theta)-\chi V}(w)\chi V) dw.
\end{split}
\]
Then we recall \eqref{eqn:lmult} to conclude that
\[  \frac{1}{ 2 \pi i } \tr \int_{\partial B(z,\delta)} (w-H_\epsilon(\theta))^{-1} dw = m_\theta(z), \]
which implies that $H_\epsilon(\theta)$ has $m_\theta(z)$ eigenvalues in $B(0,\delta)$.
\end{proof}

Finally, we can give the proof of Theorem \ref{t:1}.
\begin{proof}
We assume from now on that $\epsilon <\epsilon_0$. Fix any $ \Omega \Subset \{ z\,:\,-2\beta_0 < \arg z < 3\pi/2 + 2\beta_0 \}$, we can choose $ \theta \in D_{\beta_0}\cap \CC^+ $ such that
\[\Omega \Subset \{ z\,:\,-2a(\theta) < \arg z < 3\pi/2 + 2a(\theta) \}.\]
In view of Lemma \ref{lem:resonance}, we see that $\{z_j\}_{j=1}^\infty$, the resonances of $H$ in $\Omega$, can be identified as the eigenvalues of $H(\theta)$, denoted by $\{z_{\theta,j}\}_{j=1}^\infty$. Similarly, Lemma \ref{lem:eigenvalue} guarantees that $\{z_j(\epsilon)\}_{j=1}^\infty$, the eigenvalues of $H_\epsilon$ in $\{ z\,:\,-2a(\theta) < \arg z < 3\pi/2 + 2a(\theta) \}$, are the eigenvalues of $H_\epsilon(\theta)$, denoted by $\{z_{\theta,j}(\epsilon)\}_{j=1}^\infty$. Hence it suffices to show
\begin{equation}
\label{eqn:convergence}
z_{\theta,j}(\epsilon)\to z_{\theta,j},\ \ \epsilon\to 0 +,\ \ \textrm{uniformly on }\Omega,
\end{equation}
which is a direct result of Theorem \ref{t:2}.
\end{proof}

\def\arXiv#1{\href{http://arxiv.org/abs/#1}{arXiv:#1}}

\end{document}